\definecolor{midnightblue}{rgb}{0, 0.14,0.55}
\newcommand*\mcapinn[2]{\vcenter{\hbox{$\mathsurround=0pt
			\ifx\displaystyle#1\textstyle\else#1\fi\bigcap$}}}
\newcommand*\mcupinn[2]{\vcenter{\hbox{$\mathsurround=0pt
			\ifx\displaystyle#1\textstyle\else#1\fi\bigcup$}}}
\newtheorem{theorem}{Theorem}
\newtheorem{definition}{Definition}
\newtheorem{lemma}{Lemma}
\newtheorem{proposition}{Proposition}
\newtheorem{assumption}{Assumption}
\newtheorem{example}{Example}
\title{\LARGE \bf
	Social Shaping of Dynamic Multi-Agent Systems over a Finite Horizon
}
\author{Zeinab Salehi, Yijun Chen, Ian R. Petersen, Elizabeth L. Ratnam, and Guodong Shi
	\thanks{This work was supported by the Australian Research Council under grants DP190102158, DP190103615, and LP210200473.}
	\thanks{Z. Salehi, I. R. Petersen and  E. L. Ratnam are with the Research School of Engineering, The Australian National University, Canberra, Australia. (E-mail:  zeinab.salehi@anu.edu.au; ian.petersen@anu.edu.au; elizabeth.ratnam@anu.edu.au) }
	\thanks{Y. Chen and G. Shi are with the  Australian Center for Field Robotics, The University of Sydney, NSW, Australia. (E-mail: yijun.chen@sydney.edu.au; guodong.shi@sydney.edu.au)}
}
\begin{document}

	\maketitle
	\thispagestyle{empty}
	\pagestyle{empty}

	\begin{abstract}
		This paper studies self-sustained dynamic multi-agent systems (MAS) for decentralized resource allocation operating at a competitive equilibrium over a finite horizon. The utility of resource consumption, along with the income from resource exchange, forms each agent’s payoff which is aimed to be maximized. Each utility function is parameterized by individual preferences which can be designed by agents independently. By shaping these preferences and proposing a set of utility functions, we can guarantee that the optimal resource price at the competitive equilibrium always remains socially acceptable, i.e., it never violates a given threshold that indicates affordability. First, we show this problem is solvable at the conceptual level under some convexity assumptions. Then, as a benchmark case, we consider quadratic MAS and formulate the associated social shaping problem as a multi-agent LQR problem which enables us to propose explicit utility sets using quadratic programming and dynamic programming. Finally, a numerical algorithm is presented for calculating the range of the preference function parameters which guarantee a socially accepted price. Some illustrative examples are given to examine the effectiveness of the proposed methods. 
		
	\end{abstract}

	\section{INTRODUCTION}
	Control and analysis of multi-agent systems (MAS) have received great attention among researchers due to their wide application in different areas such as economics \cite{zhang2020structured}, water systems \cite{negenborn2009distributed}, carbon markets \cite{narciso2021carbon}, robotics \cite{cheng2020safe}, power systems \cite{falsone2021new}, and smart grids \cite{pipattanasomporn2009multi}. 
	In general, a MAS consists of a group of agents who collaborate through a network to achieve a common objective or compete to reach an individual goal  \cite{deplano2018lyapunov}. 
	MAS are capable of solving  complex problems in a distributed manner  which are much harder or impossible to be solved by a single agent in a centralized way.

	
	One of the most fundamental problems in the literature is efficient resource allocation, which can be addressed by MAS approaches \cite{ebegbulem2018resource, lu2022distributed
	}. Depending on the application, there exist two common approaches: (i) social welfare where agents collaborate to maximize the total agents' utilities \cite{nguyen2014computational, chevaleyre}; (ii) competitive equilibrium in which agents compete to maximize their individual payoffs \cite{chen2021social, Bikhchandani1997}. In this paper, distributed resource allocation is treated as an optimization problem in which each agent maximizes its payoff under some constraints and the decision variable determines the amount of resource dedicated to each agent. Considering systems with dynamical states, this optimization problem becomes an optimal control problem. A competitive equilibrium, which is the pair of allocated resource and resource price, is proven to be an efficient solution to resource allocation problems by clearing the market \cite{wei2014competitive}.

	A fundamental theorem in classical welfare economics states that the competitive equilibrium is Pareto optimal, meaning that no agent can deviate from the equilibrium to achieve more profit without reducing other's payoff \cite{acemoglu2018, arrow1954, debrew1952}. It is also proved that under some convexity assumptions, the competitive equilibrium maximizes social welfare  \cite{mas-collel,Li2020,chen2021social}. Mechanism design is a well-known approach for social welfare maximization \cite{ma2020incentive}. For instance, the Groves mechanism  maximizes social welfare in a way that truth-telling of personal information by all agents forms a dominant strategy \cite{groves1973incentives}. The key point in achieving competitive equilibrium is efficient resource pricing that depends on the utility of each agent. The corresponding price, however, is not guaranteed to be affordable for all agents. If some participants select their utility functions aggressively, the price potentially increases to the point that it becomes unaffordable to other agents who have no alternative but to leave the system.  In such cases, the available resources are consumed by a limited number of affluent agents, which is not socially fair in societies where it is deemed that all entities are entitled to equal access \cite{salehi2021quadratic, salehi2021social}. A recent example is the Texas power outage disaster in February 2021, when some citizens who had access to electricity during the power outage received outstanding electricity bills for  their daily power usage, resulting in the dissatisfaction of customers \cite{Texas2021}.

	In this paper, we investigate how socially acceptable resource pricing at a competitive equilibrium is achievable  for self-sustained dynamic MAS  with distributed resource allocation over a finite horizon.  Agents allocate their resources in a way that their payoff, which consists of the utility from resource consumption and the income from resource exchange, is maximized over the whole horizon. The utility functions selected by agents affect resource pricing at the competitive equilibrium. By parameterizing these utility functions considering the preferences of agents and proposing some bounds on the parameters, we  control the resource price  such that it never exceeds a given threshold, so we achieve affordability. We  face an optimization problem and address it from  three points of view.
	\begin{itemize}
		\item A conceptual scheme, based on dynamic programming, is presented  to show  how the social shaping problem is solvable implicitly under some convexity assumptions for general classes of utility functions. Also, it is proved that when the price is positive, the total supply and demand are balanced across the network.
		
		\item The social shaping problem is reformulated for quadratic MAS, leading to an LQR problem. Solving the LQR problem using quadratic programming and dynamic programming, we propose two explicit sets for the preferences of agents which are proved to be socially admissible; i.e., they lead to socially acceptable resource prices.
		
		\item A numerical algorithm based on the bisection method is presented that provides accurate and practical bounds on the preferences of agents, followed by some convergence results.
	\end{itemize}
	
	The rest of the paper is organized as follows. In Section \ref{sec:preliminaries}, we review the multi-agent model, the system-level equilibria, and the concept of social shaping for dynamic MAS. In Section \ref{sec:conceptual}, we present a conceptual scheme for solving the  social shaping problem. In Section \ref{sec:Quadratic}, we introduce quadratic MAS and the LQR problem that follows. Then, we propose two explicit sets of agents' preferences using optimization methods. In Section \ref{sec:NumericalAlgorithm}, we present a numerical algorithm which provides accurate bounds on agents' preferences. Finally, Section \ref{sec:examples} includes some simulation results and Section \ref{sec:conclusions} contains conclusions.
	
	\textit{Notation:} We denote by $\mathbb{R}$ and $\mathbb{R}^{\geq 0}$ the fields of real numbers and  non-negative real numbers, respectively. $\mathbf I$
	is the identity  matrix with a suitable dimension. 
	The symbol $\mathbb 1$ represents a vector with an appropriate dimension whose entries are all $1$. We use $\left\| \mathbf \cdot \right\|$ to denote the Euclidean norm of a vector or its induced matrix norm.
	\section{Problem Formulation} \label{sec:preliminaries}
	In this section, we introduce the multi-agent system model, system level equilibria, and the concept of social shaping.
	
	
	\subsection{The Dynamic Multi-agent Model}\label{section:MAS}
	Consider a dynamic MAS with $n$  agents indexed in the set $\mathcal V=\{ 1, 2, \dots , n \}$. This MAS is studied in the time horizon $N$. Let time steps be indexed in the set $\mathcal{T}=\{ 0, 1, \dots , N-1 \}$. Each agent $i \in \mathcal{V}$ is a subsystem with dynamics represented by
	\begin{equation*}
		\mathbf x_i(t+1) = \mathbf A_i \mathbf x_i(t) + \mathbf B_i \mathbf u_i(t), \quad t \in \mathcal{T},
	\end{equation*}
	where $\mathbf x_i(t) \in \mathbb{R}^{d}$ is the dynamical state, $\mathbf x_i(0) \in \mathbb{R}^{d}$ is the given initial state, and $\mathbf u_i(t) \in \mathbb{R}^{m}$ is the control input. Also, $\mathbf A_i \in \mathbb{R}^{d \times d}$ and $\mathbf B_i \in \mathbb{R}^{d \times m}$ are fixed matrices. Upon reaching the state $\mathbf x_i(t)$ and employing the control input $\mathbf u_i(t)$ at time step $t \in \mathcal{T}$, each agent $i$ receives the utility $f_i(\mathbf x_i(t), \mathbf u_i(t)) = f(\cdot; \theta_i): \mathbb{R}^{d} \times \mathbb{R}^{m} \mapsto \mathbb{R}$, where $\theta_i \in \Theta$ is a personalized parameter of agent $i$. The terminal utility achieved as a result of reaching the terminal state $\mathbf x_i(N)$ is denoted by  $\phi_i(\mathbf x_i(N)) = \phi(\cdot; \theta_i): \mathbb{R}^{d} \mapsto \mathbb{R}$.
	At each time step $t \in \mathcal{T}$,  agent $i$ provides a local energy supply $a_i(t) \in \mathbb{R}^{\geq 0}$,  and consumes an amount of energy $h_i(\mathbf u_i(t)): \mathbb{R}^{m} \mapsto \mathbb{R}^{\geq 0}$  as a result of taking the control action $\mathbf u_i(t)$. The overall network supply $C(t)>0$ is then defined as $C(t) := \sum_{i=1}^n a_i(t)$ for $t \in \mathcal{T}$. Agents are interconnected through a network to sell (or buy) their surplus (or shortage) of energy. This means each agent $i \in \mathcal{V}$ can decide how much of their extra energy $a_i(t)- h_i(\mathbf u_i(t))$ is going to be traded through the network leading to a new decision variable named \emph{strategic trading decision}, denoted by $e_i(t) \in \mathbb{R}$. There is a physical constraint indicating the traded resource for each agent can never be greater than the surplus of resource, i.e., $e_i(t) \leq a_i(t)- h_i(\mathbf u_i(t))$.
	The price for unit resource exchange across the network at each time step $t \in \mathcal{T}$ is denoted by $\lambda_t$. Then, the income or cost from resource exchange for agent $i$ is represented by  $\lambda_t e_i(t)$. 
	\subsection{System-level Equilibria}
	Let $\mathbf U_i=(\mathbf u_i^\top(0), \dots, \mathbf u_i^\top(N-1))^\top$ and $\mathbf E_i=( e_i(0), \dots,  e_i(N-1))^\top$ denote the vector of control inputs and the vector of strategic trading decisions associated with agent $i$ over the whole time horizon, respectively. Also, let $\mathbf u(t)= (\mathbf u_1^\top(t), \dots, \mathbf u_n^\top(t) )^\top$ and $\mathbf e(t)= ( e_1(t), \dots,  e_n(t) )^\top$ denote the vector of  control inputs and the vector of  strategic trading decisions associated with all agents at time step $t \in \mathcal{T}$, respectively. Let $\mathbf U=(\mathbf u^\top(0), \dots, \mathbf u^\top(N-1))^\top$ and $\mathbf E=(\mathbf e^\top(0), \dots, \mathbf e^\top(N-1))^\top$ be the vector of all control inputs and the vector of all strategic trading decisions at all time steps, respectively. Let $\boldsymbol  \lambda = (\lambda_0, \dots, \lambda_{N-1})^\top$ denote the vector of resource prices throughout the entire time horizon.
	
	
	\begin{definition}
		The \textit{competitive equilibrium } for a dynamic MAS is the triplet $(\boldsymbol{\lambda}^\ast, \mathbf{U}^\ast, \mathbf{E}^\ast)$ which satisfies the following two conditions.
		\begin{itemize}
			\item[(i)]  Given $\boldsymbol \lambda^\ast$, the pair $(\mathbf U^\ast, \mathbf E^\ast)$ maximizes the individual payoff function of each agent; i.e., each $(\mathbf U_i^\ast, \mathbf E_i^\ast)$ solves the following constrained maximization problem  
			\begin{equation}\label{opt_DLTD_1}
			\resizebox {0.447\textwidth} {!} {$
				\begin{aligned}
					\max_{{\mathbf U}_i, {\mathbf E}_i} \, \, \, &  \phi(\mathbf x_i(N); \theta_i) + \sum_{t=0}^{N-1} \Big( f( \mathbf x_i(t), \mathbf u_i(t); \theta_i)+\lambda_t^\ast e_i(t) \Big) \\
					{\rm s.t.} \quad &\mathbf x_i(t+1) = \mathbf A_i \mathbf x_i(t)+ \mathbf B_i \mathbf u_i(t),\\
					& e_i(t)  \leq a_i(t) - h_i(\mathbf u_i(t)), \quad t \in \mathcal{T}. 
				\end{aligned}$}
			\end{equation}
			\item[(ii)] The optimal strategic trading $\mathbf E^\ast$ balances the total traded resource across the network at each time step; that is,
			\begin{equation}\label{trading_demand_supply_constraints}
				\sum_{i=1}^n e_i^\ast(t) =0, \quad t \in \mathcal{T}. 
			\end{equation} 
		\end{itemize}
	\end{definition}
	
	
	\medskip
	
	\begin{definition}
		The \textit{social welfare equilibrium} for a dynamic MAS is the pair $ (\mathbf{U}^\ast, \mathbf{E}^\ast)$ which solves the following optimization problem
		\begin{equation}\label{opt_social_DLD_1}
			\begin{aligned}
				\max_{\mathbf{U}, \mathbf{E}} \quad &  \sum_{i=1}^{n} \left( \phi(\mathbf x_i(N); \theta_i) + \sum_{t=0}^{N-1} f(\mathbf x_i(t), \mathbf u_i(t); \theta_i) \right) \\
				{\rm s.t.} \quad & \mathbf x_i(t+1) = \mathbf A_i \mathbf x_i(t)+ \mathbf B_i \mathbf u_i(t), \\
				& e_i(t)  \leq a_i(t) - h_i(\mathbf u_i(t)), \\
				&\sum_{i=1}^n e_i(t) =0, 
				\quad   t \in \mathcal{T}, \,\,\, i \in \mathcal V. 
			\end{aligned}
		\end{equation}
		In the social welfare equilibrium, the total agent utility functions are maximized.
	\end{definition}
	
	\medskip
	
	\begin{assumption}\label{assumption1}
		$f(\cdot; \theta_i)$ and $\phi(\cdot; \theta_i)$ are concave functions for all $i \in \mathcal{V}$. Additionally,  $h_i(\cdot)$ is a non-negative convex function such that $h_i(\mathbf z)<b$ represents a bounded open set of $\mathbf z$ in $\mathbb{R}^m$ for $b>0$ and $i \in \mathcal{V}$. Furthermore, assume $\sum_{i=1}^{n} a_i(t)>0$ for all $t \in \mathcal{T}$.
	\end{assumption}
	
	\medskip
	
	\begin{proposition}[as in \cite{chen2021social}]\label{prop1}
		Suppose Assumption \ref{assumption1} holds. Then the competitive equilibrium and the social welfare equilibrium exist and coincide. Additionally, the optimal price $\lambda^\ast_t$ in \eqref{opt_DLTD_1} is obtained from the Lagrange multiplier associated with the balancing equality constraint $\sum_{i=1}^n e_i(t) =0$ in \eqref{opt_social_DLD_1}. 
	\end{proposition}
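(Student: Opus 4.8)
\medskip
\noindent\textit{Proof idea.} The plan is to recognize \eqref{opt_social_DLD_1} as a convex program whose Lagrangian decouples across agents, so that its Karush--Kuhn--Tucker (KKT) system is exactly the aggregate of the KKT systems of the agent problems \eqref{opt_DLTD_1}; the multiplier of the coupling constraint then plays the role of the price.

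First I would eliminate the dynamics by unrolling $\mathbf x_i(t+1)=\mathbf A_i\mathbf x_i(t)+\mathbf B_i\mathbf u_i(t)$, so that every $\mathbf x_i(t)$ is an affine function of $\mathbf x_i(0)$ and $\mathbf U_i$. Composing the concave maps $f(\cdot;\theta_i)$, $\phi(\cdot;\theta_i)$ with these affine maps keeps the objective of \eqref{opt_social_DLD_1} concave in $(\mathbf U,\mathbf E)$; the constraints $e_i(t)+h_i(\mathbf u_i(t))\le a_i(t)$ are convex and the balancing constraints are affine, so \eqref{opt_social_DLD_1} is a convex optimization problem. For existence I would establish compactness of the feasible set: summing $e_i(t)\le a_i(t)-h_i(\mathbf u_i(t))$ over $i$ and using $\sum_i e_i(t)=0$ gives $0\le h_i(\mathbf u_i(t))\le C(t)$ for all $i$, so by the bounded-sublevel-set part of Assumption \ref{assumption1} the feasible $\mathbf U$ lie in a bounded set, and then $-\sum_{j\neq i}a_j(t)\le e_i(t)\le a_i(t)$ bounds $\mathbf E$; the set is closed and, under Assumption \ref{assumption1}, nonempty, hence compact, and a continuous objective attains its maximum on it.

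Next I would form the Lagrangian of \eqref{opt_social_DLD_1} with a multiplier $\lambda_t\in\mathbb R$ for each constraint $\sum_i e_i(t)=0$ and $\mu_i(t)\ge 0$ for the trading constraints, writing the coupling term as $+\sum_t\lambda_t\sum_i e_i(t)$. Then it factorizes as $\sum_{i=1}^n\mathcal L_i$ with
\begin{equation*}
\mathcal L_i=\phi(\mathbf x_i(N);\theta_i)+\sum_{t=0}^{N-1}\Big(f(\mathbf x_i(t),\mathbf u_i(t);\theta_i)+\lambda_t e_i(t)\Big)-\sum_{t=0}^{N-1}\mu_i(t)\big(e_i(t)-a_i(t)+h_i(\mathbf u_i(t))\big),
\end{equation*}
which is precisely the Lagrangian of agent $i$'s problem \eqref{opt_DLTD_1} at price $\lambda_t^\ast=\lambda_t$. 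Since \eqref{opt_social_DLD_1} is convex and (by Assumption \ref{assumption1}) has a strictly feasible point, Slater's condition holds, so $(\mathbf U^\ast,\mathbf E^\ast)$ solves \eqref{opt_social_DLD_1} iff the KKT conditions hold for some $(\boldsymbol\lambda,\{\mu_i(t)\})$; by the factorization this is equivalent to each $(\mathbf U_i^\ast,\mathbf E_i^\ast)$ satisfying the KKT conditions of the convex problem \eqref{opt_DLTD_1} at price $\boldsymbol\lambda$ — i.e.\ solving it — together with $\sum_i e_i^\ast(t)=0$. That is exactly a competitive equilibrium with $\boldsymbol\lambda^\ast=\boldsymbol\lambda$, which yields existence, coincidence, and the identification of $\lambda_t^\ast$ with the balancing multiplier.

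The main obstacle will be the constraint qualification for the coupled problem: I need controls with $\sum_i h_i(\mathbf u_i(t))<C(t)$ and then a strictly feasible splitting of the resulting slack among the $e_i(t)$ (e.g.\ $e_i(t)=s_i(t)-\tfrac1n\sum_j s_j(t)$ with $s_i(t)=a_i(t)-h_i(\mathbf u_i(t))$), which is where $C(t)=\sum_i a_i(t)>0$ and $h_i\ge 0$ (vanishing at zero control) are used; a secondary point is checking that the per-agent KKT multipliers glue consistently into a single $(\boldsymbol\lambda,\{\mu_i(t)\})$ for the social problem. Once strong duality and KKT sufficiency are in place, the rest of the equivalence is essentially the bookkeeping in the factorization above.
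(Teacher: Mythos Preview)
The paper does not supply its own proof of this proposition; it is simply imported from \cite{chen2021social}, so there is nothing in the present paper to compare your argument against line by line. That said, your outline is exactly the standard route one expects such a reference to take: cast \eqref{opt_social_DLD_1} as a finite-dimensional convex program, obtain existence by compactness of the feasible set (your bounding of $\mathbf u_i(t)$ via $h_i(\mathbf u_i(t))\le C(t)$ and of $e_i(t)$ via the trading/balance constraints is the right mechanism), invoke Slater to get KKT necessity and sufficiency, and then observe that the Lagrangian is additively separable across agents so that the social KKT system is exactly the union of the agent KKT systems in \eqref{opt_DLTD_1} together with the clearing condition \eqref{trading_demand_supply_constraints}. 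The identification of $\lambda_t^\ast$ with the multiplier on $\sum_i e_i(t)=0$ then drops out automatically. Both directions of the equivalence go through because KKT is sufficient for the (convex) agent problems, and Slater for each agent problem is immediate since $e_i(t)$ is unconstrained below.

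One small point to tighten: you justify the Slater point for \eqref{opt_social_DLD_1} by saying $h_i$ ``vanishes at zero control,'' but Assumption~\ref{assumption1} does not literally assert $h_i(0)=0$. What it does say is that $\{\mathbf z:h_i(\mathbf z)<b\}$ is a bounded open set for every $b>0$; read as intended (these sublevel sets are nonempty), this forces $\inf h_i=0$, which is what you actually need to manufacture controls with $\sum_i h_i(\mathbf u_i(t))<C(t)$ and then your splitting $e_i(t)=s_i(t)-\tfrac1n\sum_j s_j(t)$ gives the strictly feasible point. It is worth making that deduction explicit rather than appealing to $h_i(0)=0$. Apart from this, your proposal is correct and complete at the level of a proof idea.
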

	
	\subsection{Social Shaping Problem}
	The optimal price $\lambda^\ast_t$, which is the Lagrange multiplier corresponding to the equality constraint in \eqref{opt_social_DLD_1}, depends on the utility functions of agents. If there are no regulations on the choice of utility functions, the price may become extremely high and unaffordable for some agents. In this case, those who have found the  price unaffordable cannot compete in the market and have no alternative but to leave the system. Consequently, all of the resources will be consumed by a limited number of agents who have dominated the price by aggressively selecting their utilities. This is indeed  socially unfair and not sustainable. 
	So we need a mechanism, called  social shaping,  which ensures the price is always below an acceptable threshold denoted by $\lambda^\dag \in \mathbb{R}^{>0}$. The problem of social shaping is addressed for static MAS in \cite{salehi2021quadratic} and \cite{salehi2021social}. Now, we define an extended version of the  social shaping problem for dynamic MAS as follows.

	\begin{definition}[Social shaping for dynamic MAS]
		Consider a dynamic MAS whose agents $i \in \mathcal{V}$ have $f(\cdot; \theta_i)$ and $\phi(\cdot; \theta_i)$ as their running utility function and terminal utility function, respectively. Let $\lambda^\dag \in \mathbb{R}^{>0}$ be the given price threshold accepted by all agents. Find a range $\Theta$ of personal parameters $\theta_i$ such that if  $\theta_i \in \Theta$ for $i \in \mathcal{V}$, or $(\theta_1, \dots, \theta_n) \in \Theta^n$,
		 then we yield $\lambda_t^\ast \leq \lambda^\dag$ at all time steps $t \in \mathcal{T}$.
	\end{definition}
	\section{Conceptual Social Shaping}\label{sec:conceptual}
	In this section, we examine how the social shaping problem of dynamic MAS can be solved conceptually.
	
	\medskip
	
	\begin{lemma}\label{lemma1}
		Consider the dynamic MAS. If Assumption \ref{assumption1} is satisfied, then $\lambda^\ast_t \geq 0$ for all $t \in \mathcal{T}$.
	\end{lemma}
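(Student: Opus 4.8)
The plan is to exploit the very special way the strategic trading variables $e_i(t)$ enter the individual optimization problem \eqref{opt_DLTD_1}. For each fixed agent $i$ and time step $t$, the scalar $e_i(t)$ appears only linearly in the payoff, with coefficient $\lambda_t^\ast$; it is constrained only from above, via $e_i(t) \le a_i(t) - h_i(\mathbf u_i(t))$, with no lower bound; and it does not enter the state recursion at all. I would first invoke Proposition \ref{prop1}: under Assumption \ref{assumption1} the competitive equilibrium $(\boldsymbol\lambda^\ast, \mathbf U^\ast, \mathbf E^\ast)$ exists, so in particular, for every $i$, the pair $(\mathbf U_i^\ast, \mathbf E_i^\ast)$ is an actual maximizer of \eqref{opt_DLTD_1} with $\boldsymbol\lambda = \boldsymbol\lambda^\ast$.

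Next I would argue by contradiction. Suppose $\lambda_{t_0}^\ast < 0$ for some $t_0 \in \mathcal T$, and fix any agent $i$. Starting from the equilibrium point, keep $\mathbf U_i^\ast$ and all components of $\mathbf E_i^\ast$ fixed except the $t_0$-th one, which I replace by $e_i^\ast(t_0) - s$ for $s > 0$. Since the dynamics do not involve $e_i$, the state trajectory $\mathbf x_i(\cdot)$ is unchanged; the surplus constraint at $t_0$ reads $e_i^\ast(t_0) - s \le a_i(t_0) - h_i(\mathbf u_i^\ast(t_0))$, which still holds because we only decreased its left-hand side; and every constraint at the remaining time steps is untouched. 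Hence the perturbed point is feasible for \eqref{opt_DLTD_1}, and its objective value equals the equilibrium value plus $-\lambda_{t_0}^\ast s$, which tends to $+\infty$ as $s \to +\infty$ because $\lambda_{t_0}^\ast < 0$. This contradicts the optimality of $(\mathbf U_i^\ast, \mathbf E_i^\ast)$ for \eqref{opt_DLTD_1}. Therefore $\lambda_{t_0}^\ast \ge 0$, and since $t_0 \in \mathcal T$ was arbitrary, $\lambda_t^\ast \ge 0$ for all $t \in \mathcal T$.

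As an alternative derivation, one may write down the KKT conditions of \eqref{opt_DLTD_1}, which under Assumption \ref{assumption1} is a convex program (concave objective after eliminating $\mathbf x_i$ through the affine dynamics, convex feasible set since $h_i$ is convex), with Slater's condition satisfied trivially by taking every $e_i(t)$ sufficiently negative; stationarity in $e_i(t)$ then yields $\lambda_t^\ast = \mu_i(t)$, where $\mu_i(t) \ge 0$ is the multiplier associated with the surplus constraint, so again $\lambda_t^\ast \ge 0$. I do not expect a genuine obstacle here: the only point requiring care is checking that perturbing $e_i(t_0)$ in isolation preserves feasibility, which is immediate from the structure noted above; the substantive input is the existence of the competitive equilibrium, which is borrowed from Proposition \ref{prop1}.
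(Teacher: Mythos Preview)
Your argument is correct: the linear, upper-bounded-only appearance of $e_i(t)$ in \eqref{opt_DLTD_1} forces $\lambda_t^\ast \ge 0$, since otherwise the payoff is unbounded above and no maximizer can exist, contradicting the existence guaranteed by Proposition~\ref{prop1}; the KKT variant you sketch is equally valid. The paper itself does not prove the lemma but only points to an external reference (Proposition~2 in \cite{chen2021social}), so your self-contained derivation is in fact more detailed than what the paper provides, and almost certainly aligns with the referenced argument.
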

	\begin{proof}
		The proof is similar to the proof of Proposition 2 in \cite{chen2021social}. 
	\end{proof}
	
	\medskip 
	
	\begin{proposition}\label{prop2}
		Consider the dynamic MAS. Let Assumption \ref{assumption1} hold. If $\lambda^\ast_t > 0$ then the total demand and supply are balanced at time step $t$; that is,
		\begin{equation}\label{eq3}
			\sum_{i=1}^{n} h_i(\mathbf u_i^\ast(t)) = C(t).
		\end{equation}
	\end{proposition}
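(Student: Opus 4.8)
The plan is to exploit the agent-level optimality characterization in the competitive equilibrium, which by Proposition \ref{prop1} coincides with the social welfare equilibrium, together with the market-clearing condition \eqref{trading_demand_supply_constraints}. The key structural observation is that in each agent's problem \eqref{opt_DLTD_1} the trading variable $e_i(t)$ is completely decoupled from the state dynamics: it appears only in the objective, through the term $\lambda_t^\ast e_i(t)$, and in the single inequality constraint $e_i(t) \le a_i(t) - h_i(\mathbf u_i(t))$.

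First I would fix a time step $t \in \mathcal{T}$ with $\lambda_t^\ast > 0$ and consider the optimal pair $(\mathbf U_i^\ast, \mathbf E_i^\ast)$ of agent $i$ (which exists by Proposition \ref{prop1}). I claim that the surplus constraint at that time step is active, i.e. $e_i^\ast(t) = a_i(t) - h_i(\mathbf u_i^\ast(t))$. Indeed, holding $\mathbf U_i^\ast$ and all other components of $\mathbf E_i^\ast$ fixed, the state trajectory $\mathbf x_i(\cdot)$ is unchanged, so the only part of the payoff that depends on $e_i(t)$ is $\lambda_t^\ast e_i(t)$, which is strictly increasing in $e_i(t)$ since $\lambda_t^\ast > 0$; the feasible set for $e_i(t)$ is the half-line $\{\, e_i(t) : e_i(t) \le a_i(t) - h_i(\mathbf u_i^\ast(t)) \,\}$, whose maximizer is its right endpoint. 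If $e_i^\ast(t)$ were strictly below that endpoint, replacing it by the endpoint would remain feasible and strictly increase agent $i$'s payoff, contradicting optimality.

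Then I would sum the identity $e_i^\ast(t) = a_i(t) - h_i(\mathbf u_i^\ast(t))$ over $i \in \mathcal V$, obtaining $\sum_{i=1}^n e_i^\ast(t) = \sum_{i=1}^n a_i(t) - \sum_{i=1}^n h_i(\mathbf u_i^\ast(t)) = C(t) - \sum_{i=1}^n h_i(\mathbf u_i^\ast(t))$, where the last equality is the definition of $C(t)$. Invoking the balancing condition \eqref{trading_demand_supply_constraints}, the left-hand side vanishes, which yields \eqref{eq3}.

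The argument is short, and the only point requiring care is the justification that the surplus constraint binds; the essential fact one must use there is precisely that $e_i(t)$ does not enter the dynamics $\mathbf x_i(t+1) = \mathbf A_i \mathbf x_i(t) + \mathbf B_i \mathbf u_i(t)$ or any utility term, so increasing $e_i(t)$ is costless apart from the surplus bound. As an alternative route I would mention writing the KKT conditions of the social welfare problem \eqref{opt_social_DLD_1}: stationarity in $e_i(t)$ forces the multiplier of $e_i(t) \le a_i(t) - h_i(\mathbf u_i(t))$ to equal $\lambda_t^\ast$, so when $\lambda_t^\ast > 0$ complementary slackness activates that constraint for every $i$, and summing again gives \eqref{eq3}.
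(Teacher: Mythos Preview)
Your proof is correct and follows essentially the same approach as the paper. The paper's argument introduces slack variables $s_i(t)\ge 0$ so that $e_i(t)=a_i(t)-h_i(\mathbf u_i(t))-s_i(t)$ and then observes that, since $\lambda_t^\ast>0$, the objective is strictly decreasing in $s_i(t)$ and hence $s_i^\ast(t)=0$; this is exactly your endpoint argument phrased through a slack variable, and the concluding summation together with \eqref{trading_demand_supply_constraints} is identical.
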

	\begin{proof}
		Since Assumption \ref{assumption1} is satisfied, Proposition \ref{prop1} holds. Therefore, either the competitive problem or the social welfare problem can be solved. Consider the competitive optimization problem in \eqref{opt_DLTD_1}.  Let $s_i(t) \in \mathbb{R}^{\geq 0}$ be the slack variable for agent $i$ at time step $t \in \mathcal{T}$ and $\mathbf{S_i} = (s_i(0), ..., s_i(N-1))^\top$ be the vector of slack variables for agent $i$ throughout the whole time horizon.
		We can write the inequality constraint $e_i(t) \leq a_i(t)- h_i(\mathbf u_i(t))$ in \eqref{opt_DLTD_1} as the equality $e_i(t) = a_i(t) - h_i(\mathbf u_i(t)) -  s_i(t)$ for  $t \in \mathcal{T}$.
		Then, substituting $e_i(t) = a_i(t) - h_i(\mathbf u_i(t)) - s_i(t)$ into \eqref{opt_DLTD_1} results in an equivalent form for the optimization problem as
		\begin{equation}\label{opt_DLTD_1_2}
			\small
			\begin{aligned}
				\max_{{\mathbf U}_i, {\mathbf S}_i} \quad &  \phi(\mathbf x_i(N); \theta_i) + \sum_{t=0}^{N-1}  f( \mathbf x_i(t), \mathbf u_i(t); \theta_i) \\ & + \sum_{t=0}^{N-1} \lambda_t^\ast \big[ a_i(t) - h_i(\mathbf u_i(t)) - s_i(t) \big]\\
				{\rm s.t.} \quad & \mathbf x_i(t+1) = \mathbf A_i \mathbf x_i(t)+ \mathbf B_i \mathbf u_i(t),\\
				&s_i(t) \in \mathbb{R}^{\geq0}, \quad t \in \mathcal{T}.
			\end{aligned}
		\end{equation}
		Since $\lambda^\ast_t >0$, the resulting objective function is strictly decreasing with respect to $s_i(t)$. Consequently, the optimal slack variable maximizing the objective function is $s_i^\ast(t) =0$, meaning that the associated inequality constraint is active; that is,
		\begin{equation}\label{eq2}
			e_i^\ast(t) = a_i(t) - h_i(\mathbf u^\ast_i(t)).
		\end{equation}
		The summation of \eqref{eq2} over $i$, from $1$ to $n$, along with the balancing equality $\sum_{i=1}^{n} e_i^\ast(t)=0$ in \eqref{trading_demand_supply_constraints}, yields $\sum_{i=1}^{n}h_i(\mathbf u^\ast_i(t)) = C(t)$.
	\end{proof}
	
	\medskip
	
	Now, let us work out how the social shaping problem can be solved conceptually. Suppose Assumption \ref{assumption1} holds and $f(\cdot; \theta_i)$, $\phi(\cdot; \theta_i)$, and $h_i(\cdot)$ are continuously differentiable. Then Proposition \ref{prop1} is satisfied. In this paper, we focus on  the competitive optimization problem in \eqref{opt_DLTD_1}. 
	According to Lemma \ref{lemma1}, there holds $\lambda^\ast_t \geq 0$. We can skip the case $\lambda_t^\ast=0$, because a zero price is always socially resilient. Therefore, it is sufficient to only examine $\lambda^\ast_t >0$. Following from Proposition \ref{prop2}, the total demand and supply are balanced at time step $t$, meaning that $\sum_{i=1}^{n} h_i(\mathbf u_i^\ast(t))= C(t)$; additionally, we have $s_i^\ast(t)=0$. Substituting  $s_i(t) =0$ into \eqref{opt_DLTD_1_2} yields an equivalent form for the competitive optimization problem in \eqref{opt_DLTD_1} as
	\begin{equation}\label{opt_DLTD_1_3}
		\begin{aligned}
			\max_{{\mathbf U}_i} \quad &  \phi(\mathbf x_i(N); \theta_i ) + \sum_{t=0}^{N-1}  f( \mathbf x_i(t), \mathbf u_i(t); \theta_i )\\
			&+ \sum_{t=0}^{N-1} \lambda_t^\ast \big[ a_i(t) - h_i(\mathbf u_i(t)) \big]\\
			{\rm s.t.} \quad & \mathbf x_i(t+1) = \mathbf A_i \mathbf x_i(t)+ \mathbf B_i \mathbf u_i(t), \quad t \in \mathcal{T},
		\end{aligned}
	\end{equation}
	which is valid for $\lambda^\ast_t>0$. Please note that even if $\lambda^\ast_t = 0$ then \eqref{opt_DLTD_1_2} can be written in the form of \eqref{opt_DLTD_1_3}, although the equality in \eqref{eq3} turns into the inequality $\sum_{i=1}^{n}h_i(\mathbf u^\ast_i(t)) \leq C(t)$. This fact causes no change to the upcoming analysis. 
	The optimization problem in \eqref{opt_DLTD_1_3} is in essence an unconstrained optimal control problem 
	which can be solved by the dynamic programming approach. First, introduce the cost-to-go function for agent $i$ from time $k$ to $N$ as
	\begin{equation*}
		\resizebox {0.485\textwidth} {!} {$
			\begin{aligned}
				&J_{i}^{k \longrightarrow N} (\mathbf x_i(k), \mathbf u_i(k),  \dots, \mathbf u_i(N-1), \lambda^\ast_k, \dots, \lambda^\ast_{N-1}; \theta_i ) \\ &= \phi ( \mathbf x_i(N); \theta_i ) + \sum_{t=k}^{N-1} \Big( f (\mathbf x_i(t), \mathbf u_i(t); \theta_i )+\lambda_t^\ast \big[ a_i(t) - h_i (\mathbf u_i(t)) \big]\Big).
			\end{aligned} $}
	\end{equation*}
	Then, the optimal cost-to-go at time $k$ for agent $i$, which is also called the value function, is represented as
	\begin{equation*}
		\resizebox {0.485\textwidth} {!} {$
			\begin{aligned}
				&V_{i, k}(\mathbf x_i(k), \lambda^\ast_k, \dots, \lambda^\ast_{N-1}; \theta_i) \\ &= \max_{\mathbf u_i(k), \dots, \mathbf u_i(N-1)} \quad   J_{i}^{k \longrightarrow N} (\mathbf x_i(k), \mathbf u_i(k), \dots, \mathbf u_i(N-1), \lambda^\ast_k, \dots, \lambda^\ast_{N-1}; \theta_i) \\
				& \quad \quad \quad \quad  {\rm s.t.} \, \, \,  \quad \quad \quad  \mathbf x_i(t+1) = \mathbf A_i \mathbf x_i(t)+ \mathbf B_i \mathbf u_i(t), \quad t= k, \dots, N-1.
			\end{aligned} $}
	\end{equation*}
	Now, we go one step backward in time. According to the principle of optimality, the optimal cost-to-go from time  $k-1$ to $N$ is obtained by optimizing the sum of the running cost at time $k-1$ and the value function at time $k$; that is,
	\begin{equation}\label{eq12}
		\resizebox {0.487\textwidth} {!} {$
			\begin{aligned}
				&V_{i, k-1}(\mathbf x_i(k-1), \lambda^\ast_{k-1}, \dots, \lambda^\ast_{N-1}; \theta_i) \\ &= \max_{\mathbf u_i(k-1)} \quad   f (\mathbf x_i(k-1), \mathbf u_i(k-1); \theta_i)+\lambda_{k-1}^\ast \big[ a_i(k-1) - h_i \big(\mathbf u_i(k-1)\big) \big] \\& \quad \quad \quad \quad \quad + V_{i, k}(\mathbf x_i(k), \lambda^\ast_{k}, \dots, \lambda^\ast_{N-1}; \theta_i)\\
				& \quad \quad \, \, \, {\rm s.t.} \quad \, \, \, \mathbf x_i(k) = \mathbf A_i \mathbf x_i(k-1)+ \mathbf B_i \mathbf u_i(k-1).
			\end{aligned} $}
	\end{equation}
	Based on the system dynamics, $\mathbf x_i(k)$ depends on $\mathbf x_i(k-1)$ and $\mathbf u_i(k-1)$. Consequently, the only decision variable in \eqref{eq12} is $\mathbf u_i(k-1)$. The resulting value function depends on $\mathbf x_i(k-1)$ and $\lambda^\ast_t$ where $t=k-1, \dots, N-1$. 
	Now, let us start with the final time $k=N$. The terminal value function is
	\begin{equation*}\label{eq14}
		\begin{aligned}
			V_{i, N}(\mathbf x_i(N); \theta_i) = \phi \big( \mathbf x_i(N); \theta_i \big).
		\end{aligned}
	\end{equation*}
	Moving backward in time and substituting $\mathbf x_i(k) = \mathbf A_i \mathbf x_i(k-1)+ \mathbf B_i \mathbf u_i(k-1)$ into the cost function, we obtain
	\begin{equation*}\label{eq17}
		\begin{aligned}
			&V_{i, N-1}(\mathbf x_i(N-1), \lambda^\ast_{N-1}; \theta_i) \\&= \max_{\mathbf u_i(N-1)} \quad   f (\mathbf x_i(N-1), \mathbf u_i(N-1); \theta_i) \\& \quad \quad \quad \quad \quad +\lambda_{N-1}^\ast \big[ a_i(N-1) - h_i \big(\mathbf u_i(N-1)\big) \big]  \\& \quad \quad \quad \quad \quad + V_{i, N}(\mathbf A_i \mathbf x_i(N-1)+ \mathbf B_i \mathbf u_i(N-1); \theta_i),\\
			\vdots\\
			&V_{i, 0}(\mathbf x_i(0), \lambda_{0}^\ast, \dots, \lambda_{N-1}^\ast; \theta_i) \\ &= \max_{\mathbf u_i(0)} \quad   f (\mathbf x_i(0), \mathbf u_i(0); \theta_i)+ \lambda_{0}^\ast \big[ a_i(0) - h_i \big(\mathbf u_i(0)\big) \big] \\ & \quad \quad \quad \quad \quad + V_{i, 1}(\mathbf A_i \mathbf x_i(0)+ \mathbf B_i \mathbf u_i(0), \lambda_{1}^\ast, \dots, \lambda_{N-1}^\ast; \theta_i).
		\end{aligned}
	\end{equation*}
	To obtain the optimal control at time step $k=0$, the derivative of the associated objective function with respect to $\mathbf u_i(0)$ must equal zero; that is,
	\begin{multline*}
		\frac{\partial f(\mathbf x_i(0), \mathbf u_i(0);\theta_i)}{\partial \mathbf u_i(0)}- \lambda_{0}^\ast \nabla h_i \big(\mathbf u_i(0)\big)\\ + \frac{\partial V_{i,1}(\mathbf A_i\mathbf x_i(0) +\mathbf B_i \mathbf u_i(0), \lambda_{1}^\ast, \dots, \lambda_{N-1}^\ast;\theta_i)}{\partial \mathbf u_i(0)}=0.
	\end{multline*} 
	Proposition \ref{prop1} implies that such an optimal solution exists, although it might not be unique. {Without loss of generality, suppose the optimal solution is unique.} Thus, we can write $\mathbf u_i^\ast(0)$ as a function of $\mathbf x_i(0)$ and all $\lambda^\ast_t$ where $t \in \mathcal{T}$,   parameterized by $\theta_i$; that is,
	\begin{equation}\label{eq20}
		\mathbf u_i^\ast(0)=l_i^0(\mathbf x_i(0), \lambda_{0}^\ast, \dots, \lambda_{N-1}^\ast; \theta_i).
	\end{equation}
	Substituting \eqref{eq20} into the  equality $\sum_{i=1}^{n} h_i(\mathbf u_i^\ast(t)) = C(t)$ in \eqref{eq3}, we yield
	\begin{equation*}\label{eq18}
		\sum_{i=1}^{n} h_i(l_i^0(\mathbf x_i(0), \lambda_{0}^\ast, \dots, \lambda_{N-1}^\ast; \theta_i)) = C(0).
	\end{equation*}
	Similarly, for any other time step $k \in \mathcal{T}$ we  achieve
	\begin{equation*}\label{eq33}
		\mathbf u_i^\ast(k)= l_i^k(\mathbf x_i(0), \lambda^\ast_0, \dots, \lambda_{N-1}^\ast; \theta_i), \,\,\,\,  k \in \mathcal{T}, 
	\end{equation*}
	and
	\begin{equation}\label{eq32}
			\sum_{i=1}^{n} h_i( l_i^k(\mathbf x_i(0), \lambda^\ast_0, \dots, \lambda_{N-1}^\ast; \theta_i)) = C(k), \,\,\,\,  k \in \mathcal{T}. 
	\end{equation}
	We aim to obtain $\boldsymbol{\lambda^\ast} = (\lambda^\ast_0, \dots, \lambda^\ast_{N-1})^\top$. According to \eqref{eq32}, we have $N$ equations with $N$ variables. 
	Let $\mathbf x(k) = (\mathbf x^\top_1(k), \dots, \mathbf x^\top_n(k))^\top$, $\boldsymbol{\theta}=(\theta_1, \theta_2, \dots, \theta_n)$, and $\mathbf C =(C(0), C(1), \dots, C(N-1))$. According to Proposition \ref{prop1}, there exists $\boldsymbol \lambda^\ast$ which satisfies \eqref{eq32} although it might not be unique. Among different possible prices that satisfy the equilibrium, we consider the maximum one at each time step. In the rest of this paper, by optimal price we mean the maximum possible price associated with a fixed $\boldsymbol{\theta}$ meeting the equilibrium conditions. Solving \eqref{eq32}, the optimal price at each time step $k \in \mathcal{T}$ is obtained as
	\begin{equation*}\label{eq19}
		\lambda^\ast_k= g_k(\mathbf x(0), \mathbf C; \boldsymbol \theta), \quad k=0, \dots, N-1.
	\end{equation*}
	Additionally, for different values of agent preferences $\boldsymbol \theta$ we would obtain different optimal prices at each time step. Let us define the maximum value of the set of all possible optimal prices at each time step $k$, when $\theta_i$ takes values in the set $\Theta$ (or $\boldsymbol{\theta} \in \Theta^n$), as
	\begin{equation*}
		\chi^\Theta_k : = \max_{\boldsymbol \theta \in \Theta^n} g_k(\cdot; \boldsymbol \theta), \quad k=0, \dots, N-1.
	\end{equation*} 
	Next, we introduce
	\begin{equation*}
		\mathbf G_\Theta:=\big(	\chi^\Theta_0, \chi^\Theta_1, \dots, \chi^\Theta_{N-1} \big)^\top.
	\end{equation*}
	Each element $k$ in the vector $\mathbf G_\Theta$ is the maximum value of optimal prices at  time step $k$, when agent preferences are taken from $\boldsymbol \theta \in \Theta^n$.  This leads to the following result.

	\medskip
	
	\begin{theorem}
		Consider a dynamic MAS. Let Assumption \ref{assumption1} hold. 
		Suppose $f(\cdot; \theta_i)$, $\phi(\cdot; \theta_i)$, and $h_i(\cdot)$ are continuously differentiable. Let $\lambda^\dag \in \mathbb{R}^{>0}$ represent the given price threshold accepted by all agents.  
		Then any set $\Theta$ satisfying $\mathbf G_\Theta \leq \lambda^\dag \mathbb{1}$ ensures that $\lambda^\ast_t \leq \lambda^\dag$ for $t \in \mathcal{T}$, and thus, solves the social shaping problem of agent preferences. 
	\end{theorem}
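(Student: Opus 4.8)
The plan is to reduce the statement to the chain of identities and inequalities already assembled in this section, treating separately those time steps at which the optimal price vanishes. First I would fix an arbitrary admissible profile, i.e.\ $\theta_i \in \Theta$ for every $i \in \mathcal V$, so that $\boldsymbol\theta = (\theta_1,\dots,\theta_n) \in \Theta^n$, and fix an arbitrary $t \in \mathcal T$. By Lemma \ref{lemma1} we have $\lambda^\ast_t \ge 0$, so exactly one of the following two cases occurs, and it suffices to establish $\lambda^\ast_t \le \lambda^\dag$ in each.

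In the case $\lambda^\ast_t = 0$, since $\lambda^\dag \in \mathbb R^{>0}$ we immediately get $\lambda^\ast_t = 0 < \lambda^\dag$; this is the ``zero price is always socially resilient'' observation, and nothing further is needed at this time step. In the case $\lambda^\ast_t > 0$, Proposition \ref{prop2} applies, so the supply--demand balance \eqref{eq3} holds at time $t$ and the slack variables vanish, putting the competitive problem in the unconstrained form \eqref{opt_DLTD_1_3}. Running the dynamic-programming recursion \eqref{eq12} backward from $V_{i,N} = \phi(\cdot;\theta_i)$ and using first-order stationarity at each stage represents the optimal controls as $\mathbf u_i^\ast(k) = l_i^k(\mathbf x_i(0), \lambda^\ast_0,\dots,\lambda^\ast_{N-1};\theta_i)$; substituting these into \eqref{eq3} for all $k$ produces the square system \eqref{eq32}, whose solution --- taking, as agreed, the maximal equilibrium price at each stage --- is $\lambda^\ast_t = g_t(\mathbf x(0),\mathbf C;\boldsymbol\theta)$. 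Because $\boldsymbol\theta \in \Theta^n$, this value is at most $\max_{\boldsymbol\theta' \in \Theta^n} g_t(\cdot;\boldsymbol\theta') = \chi^\Theta_t$, the $t$-th entry of $\mathbf G_\Theta$; the hypothesis $\mathbf G_\Theta \le \lambda^\dag \mathbb 1$ then gives $\lambda^\ast_t \le \chi^\Theta_t \le \lambda^\dag$.

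Combining the two cases yields $\lambda^\ast_t \le \lambda^\dag$ for every $t \in \mathcal T$ and every profile with $\theta_i \in \Theta$, which is exactly the defining property of a solution to the social shaping problem, completing the argument. The one point that genuinely needs attention --- and the only place the reasoning could slip --- is the case split: the formula $\lambda^\ast_t = g_t(\cdot;\boldsymbol\theta)$ and the balance \eqref{eq3} were derived under $\lambda^\ast_t > 0$, so the vanishing-price stages must be dismissed by the trivial bound rather than through the formula, as above. Beyond that, the proof is a direct appeal to Proposition \ref{prop1}, Proposition \ref{prop2}, and the construction of $\mathbf G_\Theta$; no new estimate is required.
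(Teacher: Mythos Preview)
Your proposal is correct and mirrors the paper's own argument: the paper does not give a separate proof of this theorem but presents it as the direct outcome of the Section~\ref{sec:conceptual} buildup (Lemma~\ref{lemma1}, Proposition~\ref{prop2}, the dynamic-programming derivation of $g_k$, and the definition of $\chi^\Theta_k$ and $\mathbf G_\Theta$), which is exactly the chain you invoke. Your explicit case split on $\lambda^\ast_t=0$ versus $\lambda^\ast_t>0$ is slightly cleaner than the paper's passing remark that the zero-price case ``causes no change to the upcoming analysis,'' but the content is the same.
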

	
	\section{Quadratic Social Shaping}\label{sec:Quadratic}
	In this section, we  examine quadratic utility functions for dynamic MAS and explicitly propose two sets of personal parameters which guarantee that the optimal prices at all time steps are socially resilient. 
	
	\medskip
	
	\begin{assumption}\label{assumption2}
		Consider the dynamic MAS introduced in Section \ref{section:MAS}. Let $\theta_i:= (\mathbf Q_i, \mathbf R_i)$, where $\mathbf Q_i \in \mathbb{R}^{d \times d}$, $\mathbf Q_i=\mathbf Q_i^\top>0$ and $\mathbf R_i \in \mathbb{R}^{m \times m}$, $\mathbf R_i=\mathbf R_i^\top>0$.
		Assume for all $i\in \mathcal V$ we have	
		\begin{equation*}\label{eq_utility_q}
			\begin{gathered}
				\begin{aligned}	
					f(\mathbf x_i(t), \mathbf u_i(t); \theta_i) &= - \mathbf x_i^\top(t) \mathbf Q_i \mathbf x_i(t)  - \mathbf u_i^\top(t) \mathbf R_i \mathbf u_i(t),\\
					\phi(\mathbf x_i(N); \theta_i) &= - \mathbf x_i^\top(N) \mathbf Q_i \mathbf x_i(N),\\
					h_i(\mathbf u_i(t)) &=  \mathbf u_i^\top(t)\mathbf H_i \mathbf u_i(t),
				\end{aligned}
			\end{gathered}
		\end{equation*}
		where $\mathbf H_i \in \mathbb{R}^{m \times m}$, $\mathbf H_i=\mathbf H_i^\top>0$. 
	\end{assumption}

	\medskip 
	
	\begin{assumption}\label{assumption3}
		Consider the dynamic MAS in Assumption \ref{assumption2} with a given initial state $\mathbf x_i(0)$ such that  $\left\| {{\mathbf x_i}(0)} \right\| \le \gamma$, $\left\|\mathbf A_i \right\| \leq \alpha$, $\left\| \mathbf B_i \right\| \leq \beta$, and $\mathbf H_i \geq \rho \mathbf I$ for $i \in \mathcal V$. Suppose that $\gamma, \alpha, \beta, \rho \in \mathbb{R}^{>0}$.
	\end{assumption}

	\medskip 
	
	We aim to solve the following social shaping problem.
	
	\medskip
	
	\noindent{\bf Dynamic \& Quadratic Social Shaping Problem.} Suppose Assumptions \ref{assumption2} and \ref{assumption3} hold. Let $\lambda^\dag \in \mathbb{R}^{>0}$ be the given price threshold accepted by all agents, and $\delta_{\rm max}  \in \mathbb{R}^{>0}$ be an upper bound for the norm of the personal parameter $\mathbf Q_i$. We propose an admissible set for $\delta_{\rm max}$ such that all utility functions satisfying $\left\| \mathbf Q_i \right\| \leq \delta_{\rm max}$  (or $\mathbf Q_i \leq \delta_{\rm max} \mathbf I$) lead to socially acceptable energy prices at all time steps; i.e., $ \lambda^\ast_t \leq \lambda^\dag$ for $t \in \mathcal{T}$. 
	
	\medskip
	
	To address this problem, we use two approaches:  quadratic programming and dynamic programming.
	
	\subsection{Quadratic Programming Approach}\label{sec:QuadProg}
	Since Assumption \ref{assumption2} is satisfied, Proposition \ref{prop1} holds. We examine the competitive optimization problem in \eqref{opt_DLTD_1}. According to Lemma \ref{lemma1}, there holds $\lambda^\ast_t \geq 0$. We skip the case $\lambda_t^\ast=0$, because a zero price is always socially resilient. Hence, it is sufficient to only study $\lambda^\ast_t >0$.  According to \eqref{opt_DLTD_1_3}, the optimization problem in \eqref{opt_DLTD_1} can be reformulated as 
	\begin{equation}\label{opt_DLTD_2}
		\begin{aligned}
			\max_{{\mathbf U}_i} \quad &- \mathbf x_i^\top(N) \mathbf Q_i \mathbf x_i(N) + \sum_{t=0}^{N-1} \big[ \big. - \mathbf x_i^\top(t) \mathbf Q_i \mathbf x_i(t)   \\  &-\mathbf u_i^\top(t) \mathbf R_i \mathbf u_i(t)  +  \lambda_t^\ast \big(a_i(t) -  \mathbf u_i^\top(t)\mathbf H_i \mathbf u_i(t) \big) \big. \big]  \\
			{\rm s.t.} \quad & \mathbf x_i(t+1) = \mathbf A_i \mathbf x_i(t)+ \mathbf B_i \mathbf u_i(t), \quad   t \in \mathcal{T}. \\ 
		\end{aligned}
	\end{equation}

	\begin{theorem}\label{theorem2}
		Consider the dynamic MAS described in Assumptions \ref{assumption2} and \ref{assumption3} on the time horizon $N$.
		Suppose $\delta_{\rm max} \in \mathbb{R}^{>0}$ is selected from the following set
		\begin{equation*}\label{eq_set_Q1}
					\begin{aligned}
						&\mathscr{S}_{\ast} =\left\{ \Bigg. \right. \delta_{\rm max} \in \mathbb{R}^{>0} : 
						\delta_{\rm max} \sum_{t=k+1}^{N}  \Bigg[ \Bigg. \gamma \alpha^{2t-k-1}   \\ &+ \beta   { \sum_{\substack{ j=0 \\ j\ne k}}^{t-1}} \sqrt{\frac{C(j)}{\rho}} \alpha^{2t-j-k-2}  \Bigg.\Bigg]  \leq \frac{\sqrt{C(k)\rho}}{n \beta}\lambda^\dag \, \, \, \, {\rm for}\,\, \forall k \in \mathcal{T} \left. \Bigg. \right\}.
					\end{aligned}
		\end{equation*} 
		Then for all quadratic utility functions satisfying $\left\| \mathbf Q_i \right\| \leq \delta_{\rm max}$  (or $\mathbf Q_i \leq \delta_{\rm max} \mathbf I$), the resulting optimal price is socially resilient, i.e., $\boldsymbol \lambda^\ast \leq \lambda^\dag \mathbb{1}$.
	\end{theorem}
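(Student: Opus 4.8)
The plan is to reduce each agent's problem to a finite-horizon LQR, solve it by the Riccati recursion, and then extract an explicit bound on each $\lambda^\ast_t$ from the first-order optimality conditions, using the supply--demand balance of Proposition~\ref{prop2} together with the uniform bounds of Assumption~\ref{assumption3}. Since Assumption~\ref{assumption2} is a special case of Assumption~\ref{assumption1}, Proposition~\ref{prop1} applies, and by Lemma~\ref{lemma1} it suffices to treat $\lambda^\ast_t>0$ (for $\lambda^\ast_t=0$ there is nothing to prove). For such $t$, Proposition~\ref{prop2} gives $\sum_{i=1}^n h_i(\mathbf u_i^\ast(t))=C(t)$, each agent solves the unconstrained quadratic problem \eqref{opt_DLTD_2}, and the a priori input bound
\[
\rho\|\mathbf u_i^\ast(t)\|^2 \le \mathbf u_i^{\ast\top}(t)\mathbf H_i\mathbf u_i^\ast(t) = h_i(\mathbf u_i^\ast(t)) \le \sum_{\ell=1}^n h_\ell(\mathbf u_\ell^\ast(t)) = C(t)
\]
yields $\|\mathbf u_i^\ast(t)\| \le \sqrt{C(t)/\rho}$ for all $i\in\mathcal V$, $t\in\mathcal T$.

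Next I would solve \eqref{opt_DLTD_2} by dynamic programming. The value function of agent $i$ at step $k$ has the form $V_{i,k}(\mathbf x) = -\mathbf x^\top\mathbf P_{i,k}\mathbf x + \mathrm{const}$, where $\mathbf P_{i,N}=\mathbf Q_i$ and $\mathbf P_{i,k} = \mathbf Q_i + \mathbf A_i^\top\mathbf P_{i,k+1}\mathbf A_i - \mathbf A_i^\top\mathbf P_{i,k+1}\mathbf B_i\big(\mathbf R_i+\lambda_k^\ast\mathbf H_i+\mathbf B_i^\top\mathbf P_{i,k+1}\mathbf B_i\big)^{-1}\mathbf B_i^\top\mathbf P_{i,k+1}\mathbf A_i$, and the unique maximizer satisfies $\big(\mathbf R_i+\lambda_k^\ast\mathbf H_i+\mathbf B_i^\top\mathbf P_{i,k+1}\mathbf B_i\big)\mathbf u_i^\ast(k) = -\mathbf B_i^\top\mathbf P_{i,k+1}\mathbf A_i\mathbf x_i(k)$. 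Since $\mathbf Q_i>0$, each $\mathbf P_{i,k}$ is positive definite and the subtracted Riccati term is positive semidefinite, so $\mathbf P_{i,k} \leq \mathbf Q_i + \mathbf A_i^\top\mathbf P_{i,k+1}\mathbf A_i$ in the positive semidefinite order, whence $\|\mathbf P_{i,k}\| \le \|\mathbf Q_i\| + \alpha^2\|\mathbf P_{i,k+1}\|$; iterating this from the terminal value $\mathbf P_{i,N}=\mathbf Q_i$ with $\|\mathbf Q_i\|\le\delta_{\rm max}$ gives $\|\mathbf P_{i,k}\| \le \delta_{\rm max}\sum_{s=0}^{N-k}\alpha^{2s}$.

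The key estimate comes from contracting the optimality relation with $\mathbf u_i^{\ast\top}(k)$ and discarding the nonnegative quadratic forms associated with $\mathbf R_i$ and with $\mathbf B_i^\top\mathbf P_{i,k+1}\mathbf B_i$:
\[
\lambda_k^\ast h_i(\mathbf u_i^\ast(k)) \le \big|\mathbf u_i^{\ast\top}(k)\mathbf B_i^\top\mathbf P_{i,k+1}\mathbf A_i\mathbf x_i(k)\big| \le \|\mathbf u_i^\ast(k)\|\,\alpha\beta\,\|\mathbf P_{i,k+1}\|\,\|\mathbf x_i(k)\|.
\]
Crucially, $\mathbf x_i(k)$ is determined by $\mathbf x_i(0)$ and the past inputs $\mathbf u_i^\ast(0),\dots,\mathbf u_i^\ast(k-1)$ alone, so there is no circular dependence; propagating the dynamics and using the a priori input bound gives $\|\mathbf x_i(k)\| \le \alpha^k\gamma + \beta\sum_{j=0}^{k-1}\alpha^{k-1-j}\sqrt{C(j)/\rho}$. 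Substituting this together with $\|\mathbf P_{i,k+1}\|\le\delta_{\rm max}\sum_{s=0}^{N-k-1}\alpha^{2s}$ and $\|\mathbf u_i^\ast(k)\|\le\sqrt{C(k)/\rho}$, summing over $i=1,\dots,n$ (the right-hand side no longer depending on $i$), using $\sum_i h_i(\mathbf u_i^\ast(k))=C(k)$, dividing by $C(k)$, and re-indexing the product of the two finite geometric-type sums as a single sum over $t=k+1,\dots,N$, I arrive at
\[
\lambda_k^\ast \le \frac{n\beta\,\delta_{\rm max}}{\sqrt{C(k)\rho}}\sum_{t=k+1}^N\Big[\gamma\alpha^{2t-k-1} + \beta\sum_{j=0}^{k-1}\alpha^{2t-j-k-2}\sqrt{C(j)/\rho}\Big].
\]
Since $\{0,\dots,k-1\}\subseteq\{0,\dots,t-1\}\setminus\{k\}$ for every $t\ge k+1$, enlarging the inner sum can only increase the right-hand side, and then the defining inequality of $\mathscr S_\ast$ forces $\lambda_k^\ast\le\lambda^\dag$; since $k\in\mathcal T$ is arbitrary, $\boldsymbol\lambda^\ast\le\lambda^\dag\mathbb 1$.

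The step I expect to be the main obstacle is this key estimate, because a priori the price $\lambda_k^\ast$, the optimal inputs, and the entire state trajectory are mutually coupled, so any naive bound on $\lambda_k^\ast$ is circular. Contracting the first-order condition with $\mathbf u_i^{\ast\top}(k)$ is what turns $\lambda_k^\ast$ into a coefficient of $h_i(\mathbf u_i^\ast(k)) \ge \rho\|\mathbf u_i^\ast(k)\|^2$, while using the feedback form of the optimum (which expresses $\mathbf u_i^\ast(k)$ through $\mathbf x_i(k)$ rather than through a future state) is what makes the chain of bounds terminate and land on exactly the expression defining $\mathscr S_\ast$; both ingredients are needed. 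What remains is the careful index bookkeeping in the re-indexing step, together with remembering to dispose of the trivial case $\lambda_k^\ast=0$ at the outset.
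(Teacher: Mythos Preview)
Your proof is correct, but it is not the route the paper takes for Theorem~\ref{theorem2}. The paper's proof uses what it calls the \emph{quadratic programming approach}: it substitutes $\mathbf x_i(t)=\mathbf A_i^t\mathbf x_i(0)+\sum_{j=0}^{t-1}\mathbf A_i^{t-j-1}\mathbf B_i\mathbf u_i(j)$ into the objective, differentiates directly with respect to $\mathbf u_i(k)$, and obtains a first-order condition in which \emph{all} inputs $\mathbf u_i(j)$ with $j\in\{0,\dots,N-1\}$ appear. After contracting with $\mathbf u_i^\top(k)$, summing over $i$, using $\sum_i h_i(\mathbf u_i^\ast(k))=C(k)$, and dropping the two sign-definite terms (the $\mathbf R_i$ quadratic and the $j=k$ diagonal $\mathbf Q_i$ quadratic), the inner sum naturally runs over $j\in\{0,\dots,t-1\}\setminus\{k\}$, which is exactly the index set in $\mathscr S_\ast$. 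No Riccati matrix ever appears.

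What you do instead is the paper's \emph{dynamic programming approach}, which the paper develops separately as Theorem~\ref{theorem3}: you introduce $\mathbf P_{i,k}$, bound it by $\delta_{\rm max}\sum_{s=0}^{N-k}\alpha^{2s}$, contract the feedback-form optimality condition with $\mathbf u_i^{\ast\top}(k)$, and bound $\|\mathbf x_i(k)\|$ using only \emph{past} inputs $j\in\{0,\dots,k-1\}$. This produces the sharper estimate of Theorem~\ref{theorem3} (inner sum over $j\le k-1$), and you then recover Theorem~\ref{theorem2} by the monotonicity observation $\{0,\dots,k-1\}\subseteq\{0,\dots,t-1\}\setminus\{k\}$. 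In effect you have proved Theorem~\ref{theorem3} and deduced Theorem~\ref{theorem2} as a corollary. The paper's direct route is more elementary (no Riccati bound, no re-indexing of a product of sums) and lands on the $\mathscr S_\ast$ expression without the final enlargement step; your route buys a strictly tighter intermediate bound, which you then relax.
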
	
	\begin{proof}
		Considering the  equality in \eqref{eq3}, we obtain
		\begin{equation}\label{eq3_2}
			\sum_{i=1}^{n} \mathbf u_i^{\ast \top}(t) \mathbf H_i \mathbf u_i^\ast(t) = C(t), \quad t \in \mathcal{T}.
		\end{equation}
	Furthermore, the following inequality holds
	\begin{equation}\label{eq3_3}
		\sigma_{\rm min}(\mathbf H_i) \left\| \mathbf u_i^\ast(t) \right\|^2 \leq \mathbf u_i^{\ast \top}(t) \mathbf H_i \mathbf u_i^\ast(t),
	\end{equation}
	where $\sigma_{\rm min}(\mathbf H_i)$ is the smallest eigenvalue of $\mathbf H_i$.
		Additionally, since $\mathbf u_i^{\ast \top}(t) \mathbf H_i \mathbf u_i^\ast(t) \geq 0$, the equality in \eqref{eq3_2} yields 
		$	\mathbf u_i^{\ast \top}(t) \mathbf H_i \mathbf u_i^\ast(t) \leq C(t).$
		Following from \eqref{eq3_3}, we obtain
		\begin{equation}\label{eq3_5}
			\sigma_{\rm min}(\mathbf H_i) \left\| \mathbf u_i^\ast(t) \right\|^2 \leq C(t).
		\end{equation}
		According to Assumption \ref{assumption3}, we have $\mathbf H_i \geq \rho I$, meaning that $\sigma_{\rm min}(\mathbf H_i) \geq \rho$. Consequently, inequality \eqref{eq3_5} results in 
		\begin{equation}\label{eq3_6}
			\left\|\mathbf u_i^\ast(t) \right\| \leq \sqrt{\frac{C(t)}{\rho}}.
		\end{equation}
		Additionally, from the dynamical equation in \eqref{opt_DLTD_2}, we obtain
		\begin{equation}\label{eq15}
			\mathbf x_i(t)=\mathbf A_i^t \mathbf x_i(0)+ \sum_{j=0}^{t-1}{\mathbf A_i^{t-j-1} \mathbf B_i \mathbf u_i(j)}, \quad t \in \{1, 2, ..., N\}.
		\end{equation}
		Substituting \eqref{eq15} into \eqref{opt_DLTD_2} yields an unconstrained optimization problem, in which the only decision variable is ${\mathbf U}_i$. The associated objective function $J$ is 
		\begin{equation*}\label{opt_DLTD_4}
				\small
				\begin{split}
					&J :=  \sum_{t=1}^{N} \Bigg[ \Bigg. - \Bigg( \mathbf A_i^t \mathbf x_i(0) + \sum_{j=0}^{t-1} \mathbf A_i^{t-j-1} \mathbf B_i \mathbf u_i(j) \Bigg)^\top \mathbf Q_i \\ &\times \Bigg( \mathbf A_i^t \mathbf x_i(0) + \sum_{j=0}^{t-1} \mathbf A_i^{t-j-1} \mathbf B_i \mathbf u_i(j) \Bigg)  \Bigg.\Bigg]  - \mathbf x_i^\top(0) \mathbf Q_i \mathbf x_i(0)   \\   &+\sum_{t=0}^{N-1} \Bigg[ \Bigg. - \mathbf u_i^\top(t) \mathbf R_i \mathbf u_i(t) +\lambda_t^\ast \Bigg(  a_i(t) -  \mathbf u_i^\top(t) \mathbf H_i \mathbf u_i(t) \Bigg) \Bigg.\Bigg]. 
				\end{split} 
		\end{equation*}
		Let $k$ denote a time interval indexed in $\mathcal{T}$. 
		Let $\frac{\partial J}{\partial \mathbf u_i(k)} =0$, which yields 
		\begin{multline}\label{eq21}
		\lambda^\ast_k \mathbf H_i \mathbf u_i(k)=-\sum_{t=k+1}^{N}\Bigg[ \Bigg. \Bigg(\mathbf A_i^{t-k-1} \mathbf B_i \Bigg)^\top  \mathbf Q_i \\ \times \Bigg( \mathbf A_i^t \mathbf x_i(0) + \sum_{j=0}^{t-1} \mathbf A_i^{t-j-1} \mathbf B_i \mathbf u_i(j) \Bigg) \Bigg.\Bigg] - \mathbf R_i \mathbf u_i(k) .
	\end{multline}
	Multiply \eqref{eq21} by $\mathbf u_i^\top(k)$ and then taking the summation over $i$, where $i \in \{1, \dots, n\}$, we obtain
	\begin{equation}\label{eq22}
		\resizebox {0.487\textwidth} {!} {$
			\begin{gathered}
				\lambda^\ast_k \sum_{i=1}^{n} \mathbf u_i^\top(k) \mathbf H_i \mathbf u_i(k) = -\sum_{t=k+1}^{N} \Bigg[ \sum_{i=1}^{n} \mathbf u_i^\top(k) \Bigg.  \Bigg(\mathbf A_i^{t-k-1} \mathbf B_i \Bigg)^\top   \mathbf Q_i \\ \times \Bigg( \mathbf A_i^t \mathbf x_i(0) + \sum_{j=0}^{t-1} \mathbf A_i^{t-j-1} \mathbf B_i \mathbf u_i(j) \Bigg)  \Bigg.\Bigg] - \sum_{i=1}^{n} \mathbf u_i^\top(k) \mathbf R_i \mathbf u_i(k).
			\end{gathered}$}
	\end{equation}
	Substitute $\sum_{i=1}^{n} \mathbf u_i^{\top}(k) \mathbf H_i \mathbf u_i(k) = C(k)$ in \eqref{eq3_2} into \eqref{eq22}, then
			\begin{equation}\label{eq25}
				\small
				\resizebox {0.487\textwidth} {!} {$
					\begin{aligned}
						&\lambda^\ast_k = -\frac{1}{C(k)}  \sum_{t=k+1}^{N} \Bigg[ \sum_{i=1}^{n}  \Bigg.  \Bigg(\mathbf A_i^{t-k-1} \mathbf B_i \mathbf u_i(k) \Bigg)^\top \mathbf Q_i \Bigg( \mathbf A_i^t \mathbf x_i(0) \\&+ \sum_{j=0}^{t-1} \mathbf A_i^{t-j-1} \mathbf B_i \mathbf u_i(j) \Bigg)  \Bigg.\Bigg]- \frac{1}{C(k)}\sum_{i=1}^{n} \mathbf u_i^\top(k) \mathbf R_i \mathbf u_i(k).
					\end{aligned}$}
			\end{equation}	
			Since $\mathbf Q_i>0$, we obtain $$-(\mathbf A_i^{t-k-1}\mathbf B_i \mathbf u_i(k))^\top \mathbf Q_i \mathbf A_i^{t-k-1} \mathbf B_i \mathbf u_i(k)\leq 0.$$ Similarly, $\mathbf R_i>0$, and  $- \mathbf u_i^\top(k) \mathbf R_i \mathbf u_i(k) \leq 0$. 
			
			Next, we seek an upper bound for $\lambda^\ast_k$. First, let these two terms equal zero. Then use Assumption \ref{assumption3} and the inequality in \eqref{eq3_6}, and by substitution into  \eqref{eq25},  we yield
			\begin{multline}\label{eq24}
				\lambda^\ast_k \leq \frac{1}{C(k)}  \sum_{t=k+1}^{N}  \Bigg[ \Bigg. n \alpha^{2t-k-1}\beta \sqrt{\frac{C(k)}{\rho}}  \delta_{\rm max} \gamma  \\ + n \beta^2 \sqrt{\frac{C(k)}{\rho}} \delta_{\rm max}\sum_{\substack{ j=0 \\ j\ne k}}^{t-1} \alpha^{2t-j-k-2} \sqrt{\frac{C(j)}{\rho}} \Bigg.\Bigg].
			\end{multline}
			By assumption, the right-hand side of \eqref{eq24} is less than or equal to $\lambda^\dag$. Therefore, we obtain $\lambda^\ast_k \leq \lambda^\dag$.
		\end{proof}

		\subsection{Dynamic Programming Approach}
		Similar to Section \ref{sec:QuadProg}, we only consider $\lambda^\ast_t >0$ and we deal with the optimization problem in \eqref{opt_DLTD_2} which is equivalent to
		\begin{equation}\label{opt_DLTD_2_3}
			\begin{aligned}
				\max_{{\mathbf U}_i} \quad   &- \mathbf x_i^\top(N) \mathbf Q_i \mathbf x_i(N) + \sum_{t=0}^{N-1} \big[ \big. - \mathbf x_i^\top(t) \mathbf Q_i \mathbf x_i(t) \\ &- \mathbf u_i^\top(t) \bigg(\mathbf R_i + \lambda^\ast_t \mathbf H_i \bigg) \mathbf u_i(t)+\lambda_t^\ast  a_i(t) \big. \big] \\
				{\rm s.t.} \quad & \mathbf x_i(t+1) = \mathbf A_i \mathbf x_i(t)+ \mathbf B_i \mathbf u_i(t), \quad t \in \mathcal{T}.
			\end{aligned}
		\end{equation}
		
		\medskip
		
		\begin{theorem}\label{theorem3}
			Consider the dynamic MAS on the time horizon ${N}$. Let Assumptions \ref{assumption2} and \ref{assumption3} hold. Suppose $\delta_{\rm max} \in \mathbb{R}^{>0}$ is selected from the following set
			\begin{multline*}\label{eq_set_Q2}
					\mathscr{S}_{\ast} =\left\{ \Bigg. \right. \delta_{\rm max} \in \mathbb{R}^{>0} : 
					\delta_{\rm max} \sum_{t=1}^{N}  \gamma \alpha^{2t-1}    \leq \frac{\sqrt{C(0)\rho}}{n \beta}  \lambda^\dag,\\
					\delta_{\rm max} \sum_{t=k+1}^{N}  \Bigg[ \Bigg. \gamma \alpha^{2t-k-1}   + \beta  {\sum_{\substack{ j=0 }}^{k-1}} \sqrt{\frac{C(j)}{\rho}} \alpha^{2t-j-k-2} \Bigg.\Bigg] \\  \leq \frac{\sqrt{C(k)\rho}}{n \beta}  \lambda^\dag \, \, \, \,  {\rm for}\,\, \forall k \in \mathcal{T}, k\neq 0\left. \Bigg. \right\}.
			\end{multline*} 
			Then the resulting  $\boldsymbol \lambda^\ast$ is socially resilient for all utility functions satisfying $\left\| \mathbf Q_i \right\| \leq \delta_{\rm max}$  (or $\mathbf Q_i \leq \delta_{\rm max} \mathbf I$).
		\end{theorem}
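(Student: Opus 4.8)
The plan is to follow the same skeleton as the proof of Theorem \ref{theorem2}, but to replace the open-loop stationarity condition \eqref{eq21} by a \emph{closed-loop} one obtained from backward dynamic programming on \eqref{opt_DLTD_2_3}; this substitution is exactly what makes the future controls $\mathbf u_i(j)$ with $j>k$ disappear from the price bound, leaving only the past indices $j\le k-1$ and hence the tighter set $\mathscr{S}_{\ast}$. As in Theorem \ref{theorem2}, Assumption \ref{assumption2} makes Proposition \ref{prop1} applicable, so I would work with \eqref{opt_DLTD_2_3}; by Lemma \ref{lemma1} it suffices to treat $\lambda^\ast_t>0$, and from \eqref{eq3_2}--\eqref{eq3_6} I retain $\sum_{i=1}^n \mathbf u_i^{\ast \top}(t)\mathbf H_i\mathbf u_i^\ast(t)=C(t)$ and $\left\| \mathbf u_i^\ast(t)\right\|\le \sqrt{C(t)/\rho}$ for every $t\in\mathcal T$ (both hold regardless of whether $\lambda^\ast_t$ vanishes, since each summand is nonnegative).

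First I would solve each agent's problem \eqref{opt_DLTD_2_3} by dynamic programming. Since $\mathbf Q_i>0$ and $\mathbf R_i+\lambda^\ast_t\mathbf H_i>0$, a straightforward backward induction shows the value function is a strictly concave quadratic $V_{i,k}(\mathbf x)=-\mathbf x^\top\mathbf P_{i,k}\mathbf x+c_{i,k}$, with $\mathbf P_{i,N}=\mathbf Q_i$ and
\begin{multline*}
\mathbf P_{i,k}=\mathbf Q_i+\mathbf A_i^\top\mathbf P_{i,k+1}\mathbf A_i\\ -\mathbf A_i^\top\mathbf P_{i,k+1}\mathbf B_i\big(\mathbf R_i+\lambda^\ast_k\mathbf H_i+\mathbf B_i^\top\mathbf P_{i,k+1}\mathbf B_i\big)^{-1}\mathbf B_i^\top\mathbf P_{i,k+1}\mathbf A_i ,
\end{multline*}
while the unique stage-$k$ maximizer obeys $\big(\mathbf R_i+\lambda^\ast_k\mathbf H_i+\mathbf B_i^\top\mathbf P_{i,k+1}\mathbf B_i\big)\mathbf u_i^\ast(k)=-\mathbf B_i^\top\mathbf P_{i,k+1}\mathbf A_i\mathbf x_i^\ast(k)$. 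The key structural step is the two-sided estimate $0<\mathbf Q_i\le\mathbf P_{i,k}\le\mathbf Q_i+\mathbf A_i^\top\mathbf P_{i,k+1}\mathbf A_i$: by the matrix inversion lemma the subtracted term equals $\mathbf A_i^\top\big(\mathbf P_{i,k+1}^{-1}+\mathbf B_i(\mathbf R_i+\lambda^\ast_k\mathbf H_i)^{-1}\mathbf B_i^\top\big)^{-1}\mathbf A_i\ge 0$, which gives the lower bound (hence $\mathbf P_{i,k}$ is invertible, closing the induction) and, upon discarding it, the upper bound. Iterating the upper bound from $k$ to $N$ with $\left\|\mathbf A_i\right\|\le\alpha$ and $\left\|\mathbf Q_i\right\|\le\delta_{\rm max}$ (Assumption \ref{assumption3}) yields the price-free estimate $\left\|\mathbf P_{i,k+1}\right\|\le\delta_{\rm max}\sum_{m=0}^{N-k-1}\alpha^{2m}$.

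Next I would extract the price. Multiplying the stage-$k$ optimality condition by $\mathbf u_i^{\ast \top}(k)$, summing over $i$, and using $\sum_{i=1}^n\mathbf u_i^{\ast \top}(k)\mathbf H_i\mathbf u_i^\ast(k)=C(k)$ gives
\begin{multline*}
\lambda^\ast_k C(k)=-\sum_{i=1}^n\mathbf u_i^{\ast \top}(k)\mathbf B_i^\top\mathbf P_{i,k+1}\mathbf A_i\mathbf x_i^\ast(k)\\ -\sum_{i=1}^n\mathbf u_i^{\ast \top}(k)\big(\mathbf R_i+\mathbf B_i^\top\mathbf P_{i,k+1}\mathbf B_i\big)\mathbf u_i^\ast(k) ,
\end{multline*}
and since $\mathbf R_i>0$ and $\mathbf P_{i,k+1}\ge 0$ the last sum is nonnegative and may be dropped. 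I would then bound the remaining term by submultiplicativity of the norm and substitute $\left\|\mathbf B_i\right\|\le\beta$, $\left\|\mathbf A_i\right\|\le\alpha$, $\left\|\mathbf u_i^\ast(k)\right\|\le\sqrt{C(k)/\rho}$, the Riccati estimate $\left\|\mathbf P_{i,k+1}\right\|\le\delta_{\rm max}\sum_{m=0}^{N-k-1}\alpha^{2m}$, and --- crucially --- the bound $\left\|\mathbf x_i^\ast(k)\right\|\le\alpha^k\gamma+\beta\sum_{j=0}^{k-1}\alpha^{k-j-1}\sqrt{C(j)/\rho}$ obtained from \eqref{eq15} and \eqref{eq3_6}, which involves only the indices $j<k$. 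Dividing by $C(k)$, collapsing the resulting product of geometric sums by the reindexing $t=m+k+1$, and invoking the hypothesis $\delta_{\rm max}\in\mathscr{S}_{\ast}$ then gives $\lambda^\ast_k\le\lambda^\dag$ for every $k\in\mathcal T$; for $k=0$ the sum $\sum_{j=0}^{k-1}$ is empty and the argument reduces to the first inequality defining $\mathscr{S}_{\ast}$.

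The step I expect to be the main obstacle is the rigorous verification of the Riccati structure together with the monotonicity/positivity estimate $0<\mathbf Q_i\le\mathbf P_{i,k}\le\mathbf Q_i+\mathbf A_i^\top\mathbf P_{i,k+1}\mathbf A_i$; once that is in hand, the remainder is a routine repetition of the norm-chasing in Theorem \ref{theorem2}, and the entire improvement over that result stems from the single fact that the feedback form makes $\left\|\mathbf x_i^\ast(k)\right\|$ depend only on past controls while $\left\|\mathbf P_{i,k+1}\right\|$ is bounded independently of the prices.
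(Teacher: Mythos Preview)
Your proposal is correct and follows essentially the same route as the paper: both reduce \eqref{opt_DLTD_2_3} to a standard finite-horizon LQR, write the feedback law $\mathbf u_i^\ast(k)=-\big(\mathbf B_i^\top\mathbf P_{i,k+1}\mathbf B_i+\mathbf R_i+\lambda^\ast_k\mathbf H_i\big)^{-1}\mathbf B_i^\top\mathbf P_{i,k+1}\mathbf A_i\mathbf x_i^\ast(k)$, drop the NSD term in the Riccati recursion to get $\left\|\mathbf P_{i,k}\right\|\le\alpha^2\left\|\mathbf P_{i,k+1}\right\|+\left\|\mathbf Q_i\right\|$, multiply the optimality condition by $\mathbf u_i^{\ast\top}(k)$, sum over $i$, substitute \eqref{eq3_2}, discard the nonnegative $\mathbf R_i+\mathbf B_i^\top\mathbf P_{i,k+1}\mathbf B_i$ quadratic, and bound $\left\|\mathbf x_i^\ast(k)\right\|$ via \eqref{eq15} and \eqref{eq3_6}. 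The only addition you make beyond the paper is the explicit lower bound $\mathbf P_{i,k}\ge\mathbf Q_i>0$ via the matrix inversion lemma; the paper simply invokes the standard LQR solution and the observation that the subtracted Riccati term is negative semi-definite, so what you flag as the ``main obstacle'' is in fact treated as routine there.
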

		
		\begin{proof}
			Similar to the previous section,  \eqref{eq3_2} and \eqref{eq3_6} are satisfied. 
			Since $\lambda^\ast_t > 0$, we obtain $ \mathbf R_i + \lambda^\ast_t \mathbf H_i >0$. Consequently, the optimization problem in \eqref{opt_DLTD_2_3} is a \emph{standard LQR problem}. Therefore, at each time step $k \in \mathcal{T}$, the optimal control solution is obtained as
			\begin{equation*}\label{eq29}
				\resizebox {0.487\textwidth} {!} {$
						\mathbf u_i^\ast(k) = - \big(\mathbf B_i^\top \mathbf P_{i, k+1} \mathbf B_i + ( \mathbf R_i + \lambda^\ast_k \mathbf H_i) \big)^{-1} \mathbf B_i^\top \mathbf P_{i, k+1} \mathbf A_i \mathbf x_i(k),
					$}
			\end{equation*}
			where 
			\begin{multline}\label{eq_p}
				\mathbf P_{i,k} =\mathbf A_i^\top \mathbf P_{i,k+1} \mathbf A_i + \mathbf Q_i - \mathbf A_i^\top \mathbf P_{i,k+1} \mathbf B_i \big( \mathbf B_i^\top \mathbf P_{i, k+1} \mathbf B_i \\ + (\mathbf R_i + \lambda^\ast_k \mathbf H_i) \big)^{-1} \mathbf B_i^\top \mathbf P_{i, k+1} \mathbf A_i.
			\end{multline}
			The Riccati difference equation in \eqref{eq_p} is initialized with $\mathbf P_{i,N}= \mathbf Q_i$ and is solved backward from $k=N-1$ to $k=0$. Additionally, since the last term on the right-hand side of \eqref{eq_p} is negative semi-definite (NSD), we obtain
			\begin{equation*}\label{eq35}
				\mathbf P_{i,k} \leq \mathbf A_i^\top \mathbf P_{i,k+1} \mathbf A_i + \mathbf Q_i,
			\end{equation*}
			and therefore,
			\begin{equation*}\label{eq36}
				\left\|\mathbf P_{i,k} \right\| \leq \alpha^2 \left\| \mathbf P_{i,k+1} \right\| + \left\| \mathbf Q_i \right\|.
			\end{equation*}
			Starting from $k=N-1$, we obtain
			\begin{equation}\label{eq37}
				\begin{aligned}
					\left\|\mathbf P_{i,N-1} \right\| \leq & (\alpha^2 +1 ) \left\| \mathbf Q_i \right\|, \\
					\vdots\\
					\left\|\mathbf P_{i,N-p} \right\| \leq & (\alpha^{2p} + \alpha^{2(p-1)}+ \dots+\alpha^2 +1 ) \left\| \mathbf Q_i \right\|.	
				\end{aligned}
			\end{equation}
			On the other hand, to find the optimal control input, we start from $k=0$ and proceed in a forward manner. In the first step, we get
			\begin{equation}\label{eq34}
					\mathbf u_i^\ast(0) = - \big(\mathbf B_i^\top \mathbf P_{i, 1} \mathbf B_i +  \mathbf R_i + \lambda^\ast_0 \mathbf H_i \big)^{-1} \mathbf B_i^\top \mathbf P_{i, 1} \mathbf A_i \mathbf x_i(0).
			\end{equation}
		Multiply \eqref{eq34} by $\mathbf u_i^{\ast \top}(0)\big(\mathbf B_i^\top \mathbf P_{i, 1} \mathbf B_i +  \mathbf R_i + \lambda^\ast_0 \mathbf H_i \big)$ and then taking the summation over $i$, where $i\in \{1, \dots, n\}$, we obtain
	\begin{multline}\label{eq38}
		\sum_{i=1}^{n}  \mathbf u_i^{\ast \top}(0)\big(\mathbf B_i^\top \mathbf P_{i, 1} \mathbf B_i +  \mathbf R_i + \lambda^\ast_0 \mathbf H_i \big) \mathbf u_i^\ast(0) \\ = - \sum_{i=1}^{n}\mathbf u_i^{\ast \top}(0) \mathbf B_i^\top \mathbf P_{i, 1} \mathbf A_i \mathbf x_i(0).
	\end{multline}
	Substitute \eqref{eq3_2} into \eqref{eq38}, and yield
				\begin{multline}\label{eq40}
					\small
					\lambda_{0}^\ast = - \frac{1}{C(0)} \sum_{i=1}^{n}\mathbf u_i^{\ast \top}(0) \mathbf B_i^\top \mathbf P_{i, 1} \mathbf A_i \mathbf x_i(0) \\- \frac{1}{C(0)}\sum_{i=1}^{n} \mathbf u_i^{\ast \top}(0)\big(\mathbf B_i^\top \mathbf P_{i, 1} \mathbf B_i +  \mathbf R_i \big) \mathbf u_i^\ast(0).
				\end{multline}
				To obtain an upper bound for $\lambda^\ast_0$, we can omit the second term on the right-hand side of \eqref{eq40} which is always non-positive. 
				Additionally,  using norm properties and considering Assumption \ref{assumption3} and the inequalities in \eqref{eq3_6} and \eqref{eq37}, we obtain 
				\begin{equation}\label{eq41}
					\begin{aligned}
						\lambda_{0}^\ast & \leq  \frac{n}{C(0)} \sqrt{\frac{C(0)}{\rho}} \beta  \gamma \delta_{\rm max} \sum_{t=1}^{N} \alpha^{2t-1}.
					\end{aligned}
				\end{equation}
				Moving forward to the time step $k>0$, we obtain
				\begin{multline}\label{eq48}
						\lambda^\ast_k \leq \frac{n}{\sqrt{C(k)\rho}} \beta \delta_{\rm max} \sum_{t=k+1}^{N}  \Bigg[ \Bigg. \gamma \alpha^{2t-k-1}   \\+ \beta   \sum_{\substack{ j=0 }}^{k-1} \sqrt{\frac{C(j)}{\rho}} \alpha^{2t-j-k-2} \Bigg.\Bigg]. 
				\end{multline}
				By assumption, the right-hand side of \eqref{eq41} and \eqref{eq48} is less than or equal to $\lambda^\dag$, which confirms $\lambda^\ast_k \leq \lambda^\dag$ for $k \in \mathcal{T}$.
			\end{proof}
			
			
			\section{Numerical Algorithm}\label{sec:NumericalAlgorithm}
			The two proposed sets in Theorems \ref{theorem2} and \ref{theorem3} are conservative but they give a concrete idea about how the trade-off between utility functions' parameters and the price threshold is achievable. To obtain more accurate and practical results, we propose a numerical algorithm  which provides less conservative bounds on the  parameters. This algorithm proceeds based on the bisection method.
			
			\medskip
			
			\noindent\textbf{Numerical Social Shaping Problem.} Consider the social welfare problem in \eqref{opt_social_DLD_1}. Let $\delta_{\rm max} \in \mathbb{R}^{>0}$ be the designing parameter. Suppose Assumption \ref{assumption2} holds with  $\mathbf Q_i = q_i \mathbf I$ where agents have the freedom to select $q_i \in \left(0, \delta_{\rm max} \right]$. Assume $\lambda^\dag$ is the given  price threshold accepted by all agents and $\mathbf R_i$ is specified for each $i \in \mathcal{V}$. We aim to find the upper bound $\delta_{\rm max}$ by a numerical approach such that if $q_i \in \left(0, \delta_{\rm max} \right]$ for $i \in \mathcal{V}$ then $\lambda^\ast_t \leq \lambda^\dag$ for $t \in \mathcal{T}$. The key steps to reach this purpose are illustrated in Algorithm \ref{algorithm1}.
			
			
			\begin{algorithm}[h]
				\SetAlgoLined
				\textbf{Input:} {System parameters $\mathbf A_i$, $\mathbf B_i$, and $\mathbf H_i$, the initial state $\mathbf x_i(0)$,  the time horizon $N$, the penalty matrix $\mathbf R_i$, and the local supply $a_i(t)$ for $i \in \mathcal{V}$ and $t \in \mathcal{T}$}.
				
				\textbf{Structure:} {Consider $\mathbf Q_i = q_i \mathbf I$}. Define 
				\begin{equation}\label{eq52}
					\bar \lambda^\ast(\delta) = \max_{q_1, \dots, q_n \in \left(0, \delta \right] }\max_{t \in \mathcal{T}} \lambda^\ast_t.
				\end{equation}
				
				\textbf{Initialize:} Set $k=0$, $b_0 = 0$, and $d_0 =d_\varrho>0$ such that $d_\varrho$ is sufficiently large to satisfy $\bar \lambda^\ast(d_\varrho)>\lambda^\dag$. 
				
				\While{\rm True}{
					$L_k=(b_k + d_k)/2$, \quad
					$\lambda_k=\bar \lambda^\ast(L_k)$;
					
					\uIf{$\lambda_k > \lambda^\dag$}{
						$b_{k+1} = b_k$ and $d_{k+1}=L_k$;\\
						$k=k+1$;
					}
					\uElseIf{$\lambda_k < \lambda^\dag$}{
						$b_{k+1} = L_k$  and $d_{k+1}=d_k$;\\
						$k=k+1$;
					}
					\Else{
						$\delta_{\rm max} = L_k$;\\
						{break}
					}
	
				}	
				\textbf{Output:} {$\delta_{\rm max}=L_k$} if the algorithm stops after a finite number of steps. Otherwise, $\delta_{\rm max}=\lim_{k \rightarrow \infty}L_k$.
				
				\caption{Bisection-Based  Social Shaping}
				\label{algorithm1}
			\end{algorithm}
			
			\begin{lemma}\label{lemma2}
				The function $\bar \lambda^\ast(\delta)$ in \eqref{eq52} is monotonically increasing.
			\end{lemma}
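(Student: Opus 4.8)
The plan is to treat $\bar\lambda^\ast(\delta)$ as a supremum of a single fixed family of scalars over a feasible set that grows with $\delta$, and to invoke the elementary principle that such a supremum is non-decreasing in the feasible set. To set this up cleanly, fix a tuple $\mathbf q=(q_1,\dots,q_n)$ and put $\mathbf Q_i=q_i\mathbf I$; then Assumption~\ref{assumption2} holds, so by Proposition~\ref{prop1} a competitive equilibrium exists and, adopting the standing convention introduced in Section~\ref{sec:conceptual} that ``optimal price'' means the \emph{maximal} equilibrium price associated with the given preference tuple, the price vector $\boldsymbol\lambda^\ast$ is a well-defined function of $\mathbf q$ (all other data $\mathbf A_i,\mathbf B_i,\mathbf H_i,\mathbf R_i,\mathbf x_i(0),a_i(t),N$ being fixed throughout the Numerical Social Shaping Problem). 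Writing $\Lambda(\mathbf q):=\max_{t\in\mathcal T}\lambda^\ast_t$, definition~\eqref{eq52} reads $\bar\lambda^\ast(\delta)=\max_{\mathbf q\in(0,\delta]^n}\Lambda(\mathbf q)$.

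With this notation the argument is short. First I would take arbitrary $0<\delta'\le\delta$ and record the set inclusion $(0,\delta']^n\subseteq(0,\delta]^n$. Every tuple $\mathbf q$ admissible in the maximization defining $\bar\lambda^\ast(\delta')$ is therefore admissible in the one defining $\bar\lambda^\ast(\delta)$, and the function $\Lambda(\cdot)$ being maximized is literally the same in both problems. Hence the optimal value over the larger set dominates that over the smaller set, i.e.\ $\bar\lambda^\ast(\delta')\le\bar\lambda^\ast(\delta)$. Since $\delta'\le\delta$ were arbitrary in $\mathbb{R}^{>0}$, the map $\delta\mapsto\bar\lambda^\ast(\delta)$ is monotonically non-decreasing, which is the assertion of the lemma.

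The closest thing to an obstacle is purely bookkeeping: one must be sure that the inner quantity $\lambda^\ast_t$ is a genuine (single-valued) function of $\mathbf q$ rather than an ambiguous multivalued object, and this is exactly what Proposition~\ref{prop1} (existence of the equilibrium) together with the maximal-price convention secures, so $\Lambda(\mathbf q)$ is well defined for each $\mathbf q$. No continuity, convexity, or attainment property of $\Lambda$ is needed; monotonicity of a supremum over nested sets is order-theoretic, and indeed the same reasoning would deliver monotonicity even if the ``$\max$'' in \eqref{eq52} were only a supremum. (Strict monotonicity is neither claimed nor required for the bisection in Algorithm~\ref{algorithm1} to be well posed.)
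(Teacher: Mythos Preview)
Your argument is correct and is essentially the same as the paper's: both observe that increasing $\delta$ enlarges the feasible set $(0,\delta]^n$ over which the maximum is taken, so the optimal value cannot decrease. Your write-up is more careful about notation and well-definedness, but the underlying idea is identical.
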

			\begin{proof}
				When $\delta$ increases, the domain of agents' preferences expands respectively.  Therefore, the maximum possible price can never decrease.
			\end{proof}
			
			\medskip
			
			\begin{theorem}
				The auxiliary variable $L_k$ in Algorithm \ref{algorithm1} converges to $L^\ast$ for some $L^\ast \in \left(0, d_\varrho \right)$ when $k \rightarrow \infty$. 
			\end{theorem}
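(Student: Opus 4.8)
The plan is to run the standard nested-interval (bisection) argument, splitting into the case where Algorithm~\ref{algorithm1} terminates and the case where it does not. If the \emph{Else} branch is reached at some iteration $K$, then $L_k$ is eventually constant equal to $L_K=(b_K+d_K)/2$, and since $0\le b_K<d_K\le d_\varrho$ we have $L_K\ge d_K/2>0$ and $L_K\le(b_K+d_\varrho)/2<d_\varrho$, so $L^\ast=L_K\in(0,d_\varrho)$ and convergence is immediate. Hence the substance lies in the non-terminating case, for which I would first record the update invariants: every branch replaces $[b_k,d_k]$ by a subinterval of half the length, so $d_{k+1}-b_{k+1}=(d_k-b_k)/2$ and thus $d_k-b_k=d_\varrho/2^k\to 0$; moreover $b_{k+1}\in\{b_k,L_k\}$ with $L_k\ge b_k$, so $\{b_k\}$ is non-decreasing, symmetrically $\{d_k\}$ is non-increasing, and $0=b_0\le b_k\le d_k\le d_0=d_\varrho$. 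Therefore $\{b_k\}$ converges to $\sup_k b_k$, $\{d_k\}$ to $\inf_k d_k$, these limits coincide since their difference is $\lim_k(d_k-b_k)=0$, and, writing $L^\ast$ for the common value, the sandwich $b_k\le L_k\le d_k$ forces $L_k\to L^\ast$. This yields $L^\ast\in[0,d_\varrho]$ for free.

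The main obstacle — the only place where the problem structure enters rather than pure bookkeeping — is excluding the two endpoints. For $L^\ast\neq 0$: if $L^\ast=0$, then $\{b_k\}$ non-decreasing with $b_0=0$ forces $b_k\equiv 0$, which rules out the \emph{ElseIf} branch (it would set $b_{k+1}=L_k=d_k/2>0$) and the \emph{Else} branch (termination), so every iteration takes the \emph{If} branch, i.e.\ $\bar\lambda^\ast(L_k)>\lambda^\dag$ for all $k$ while $L_k=d_k/2\to 0$; but Theorems~\ref{theorem2} and \ref{theorem3}, applicable here because $\mathbf Q_i=q_i\mathbf I$ gives $\|\mathbf Q_i\|=q_i$ and Assumption~\ref{assumption3} is met for any valid constants $\gamma,\alpha,\beta$ and $\rho$ with $\mathbf H_i\ge\rho\mathbf I$, guarantee $\bar\lambda^\ast(\delta)\le\lambda^\dag$ for all sufficiently small $\delta>0$, a contradiction. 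For $L^\ast\neq d_\varrho$: if $L^\ast=d_\varrho$, then $\{d_k\}$ non-increasing with $d_0=d_\varrho$ forces $d_k\equiv d_\varrho$, ruling out the \emph{If} branch (which would set $d_{k+1}=L_k<d_\varrho$) and termination, so every iteration is the \emph{ElseIf} branch, giving $\bar\lambda^\ast(L_k)<\lambda^\dag$ with $L_k\uparrow d_\varrho$; by monotonicity (Lemma~\ref{lemma2}) together with left-continuity of $\bar\lambda^\ast$ at $d_\varrho$ we would obtain $\bar\lambda^\ast(d_\varrho)=\lim_k\bar\lambda^\ast(L_k)\le\lambda^\dag$, contradicting the initialization requirement $\bar\lambda^\ast(d_\varrho)>\lambda^\dag$. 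Combining the two exclusions gives $L^\ast\in(0,d_\varrho)$.

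I expect the delicate step to be this left-continuity of $\bar\lambda^\ast$, since the parameter set $(0,\delta]$ is not closed: I would justify it by noting that each $\lambda^\ast_t$ depends continuously on $(q_1,\dots,q_n)$ through the Riccati recursion \eqref{eq_p} and the closed-form price expressions, so that the supremum over $q_i\in(0,\delta]$ equals the supremum over the compact set $q_i\in[0,\delta]$ and hence $\delta\mapsto\bar\lambda^\ast(\delta)$ is continuous. If one prefers to avoid invoking continuity of the price map, the first iteration already gives part of the conclusion — either $d_1=d_\varrho/2$ (when $\lambda_0>\lambda^\dag$) so $L^\ast\le d_\varrho/2<d_\varrho$, or $b_1=d_\varrho/2$ (when $\lambda_0<\lambda^\dag$) so $L^\ast\ge d_\varrho/2>0$ — but the complementary endpoint in each case still needs the same regularity argument, so I would present it as above.
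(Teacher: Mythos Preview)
Your argument is correct and uses the same nested-interval bisection as the paper's proof: monotone bounded sequences $\{b_k\},\{d_k\}$ with $d_k-b_k=d_\varrho/2^k\to 0$, hence a common limit $L^\ast$ sandwiching $L_k$. Where you go beyond the paper is in the treatment of the \emph{open} interval: the paper simply asserts $b_k<L^\ast<d_k$ for every $k$ and reads off $L^\ast\in(0,d_\varrho)$ from $k=0$, whereas you explicitly exclude each endpoint by invoking Theorems~\ref{theorem2}--\ref{theorem3} (for $L^\ast>0$) and left-continuity of $\bar\lambda^\ast$ together with the initialization $\bar\lambda^\ast(d_\varrho)>\lambda^\dag$ (for $L^\ast<d_\varrho$); this is more careful than the paper, since the bisection bookkeeping alone only yields $L^\ast\in[0,d_\varrho]$.
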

			\begin{proof}
				From the update rules in Algorithm \ref{algorithm1}, we obtain 
				\begin{equation}\label{eq54}
					b_{k+1} \geq b_k, \quad \quad d_{k+1} \leq d_k,
				\end{equation} 
				and
				\begin{equation}\label{eq51}
					b_0\leq b_k < L_k < d_k \leq d_0.
				\end{equation}
				Inequalities in \eqref{eq54} and \eqref{eq51} imply that $b_k$ is monotonically increasing and  bounded above by $d_0=d_\varrho$. Similarly, $d_k$ is monotonically decreasing and  bounded below by $b_0=0$. Consequently, $b_k$ and $d_k$ converge when $k \rightarrow \infty$.
				
				Additionally, from the algorithmic steps, we obtain
				\begin{equation*}\label{eq49}
				| b_{k} - d_{k}| = 0.5^k d_\varrho.
				\end{equation*}
				Therefore, 
				\begin{equation}\label{eq50}
					\lim_{k \rightarrow \infty} |b_k - d_k| =0.
				\end{equation}
				Considering \eqref{eq54}, \eqref{eq51}, and \eqref{eq50}, we conclude 
				\begin{equation}\label{eq53}
					\lim_{k \rightarrow \infty} b_k=\lim_{k \rightarrow \infty} d_k= 	\lim_{k \rightarrow \infty} L_k =L^\ast,
				\end{equation}
				where $b_k < L^\ast< d_k$ at each iteration, and $L^\ast \in \left(0, d_\varrho \right)$.
				%
			\end{proof}
			
			\medskip

			\begin{theorem}\label{theorem4}
				Suppose there exists $\delta^\dag>0$ such that $\bar \lambda^\ast(\delta^\dag)=\lambda^\dag$. Then $\delta_{\rm max}$ obtained from Algorithm \ref{algorithm1} satisfies $\bar\lambda^\ast(\delta_{\rm max}) = \lambda^\dag$.
			\end{theorem}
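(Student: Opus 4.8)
The plan is to split on whether Algorithm~\ref{algorithm1} halts after finitely many iterations. Suppose first that it does, say at iteration $k$. The \textbf{break} is reached only through the terminal (\texttt{else}) branch, which by construction is entered exactly when $\lambda_k$ is neither strictly above nor strictly below $\lambda^\dag$, i.e. when $\lambda_k=\bar\lambda^\ast(L_k)=\lambda^\dag$. Since the output in that case is $\delta_{\rm max}=L_k$, we immediately get $\bar\lambda^\ast(\delta_{\rm max})=\lambda^\dag$, and the claim holds. So the substantive situation is the non-halting one, where $\delta_{\rm max}=\lim_{k\to\infty}L_k=L^\ast$.

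For the non-halting case, the heart of the argument is to show that the bracketing interval always contains $\delta^\dag$, that is, $b_k\le\delta^\dag\le d_k$ for every $k$. For the base case, the initialization gives $\bar\lambda^\ast(d_\varrho)>\lambda^\dag=\bar\lambda^\ast(\delta^\dag)$, so monotonicity of $\bar\lambda^\ast$ (Lemma~\ref{lemma2}) forces $d_0=d_\varrho\ge\delta^\dag$, while $b_0=0<\delta^\dag$; hence $b_0\le\delta^\dag\le d_0$. For the inductive step, assume $b_k\le\delta^\dag\le d_k$. Since the algorithm does not halt at step $k$, either $\lambda_k=\bar\lambda^\ast(L_k)>\lambda^\dag$ or $\lambda_k<\lambda^\dag$. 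In the first subcase, $\bar\lambda^\ast(L_k)>\lambda^\dag=\bar\lambda^\ast(\delta^\dag)$ together with monotonicity of $\bar\lambda^\ast$ forces $L_k\ge\delta^\dag$, so the update $b_{k+1}=b_k$, $d_{k+1}=L_k$ preserves $b_{k+1}\le\delta^\dag\le d_{k+1}$; the second subcase is symmetric, giving $L_k\le\delta^\dag$ and the update $b_{k+1}=L_k$, $d_{k+1}=d_k$.

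With the invariant $b_k\le\delta^\dag\le d_k$ in hand, I would invoke the preceding convergence result, which establishes $\lim_{k\to\infty}b_k=\lim_{k\to\infty}d_k=L^\ast$. Squeezing $\delta^\dag$ between these two limits yields $L^\ast=\delta^\dag$, and therefore $\bar\lambda^\ast(\delta_{\rm max})=\bar\lambda^\ast(L^\ast)=\bar\lambda^\ast(\delta^\dag)=\lambda^\dag$ by hypothesis. Notably this final step needs no continuity of $\bar\lambda^\ast$: only the monotonicity of Lemma~\ref{lemma2}, used to keep $\delta^\dag$ inside the bracket, and the already-proven convergence of $b_k$ and $d_k$ are required. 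The one point that warrants care is the inductive step when $\bar\lambda^\ast$ has a plateau at level $\lambda^\dag$; there the strict inequalities $\lambda_k>\lambda^\dag$ and $\lambda_k<\lambda^\dag$ still pin $L_k$ to the correct side of $\delta^\dag$, and the non-halting hypothesis prevents $L_k$ from ever landing in the interior of such a plateau, so the squeeze still identifies $L^\ast$ with $\delta^\dag$ unambiguously (and in fact shows such a plateau must be a single point).
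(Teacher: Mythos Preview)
Your proof is correct and follows essentially the same route as the paper: both split on halting versus non-halting, and in the non-halting case use monotonicity of $\bar\lambda^\ast$ (Lemma~\ref{lemma2}) together with the convergence $b_k,d_k\to L^\ast$ to identify $L^\ast$ with $\delta^\dag$. The only cosmetic difference is that you carry the bracketing invariant $b_k\le\delta^\dag\le d_k$ forward by induction and then squeeze, whereas the paper argues by contradiction that $L^\ast\neq\delta^\dag$ would eventually force $d_k<\delta^\dag$ (respectively $b_k>\delta^\dag$), violating the algorithm's update rules.
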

			\begin{proof}
				If the algorithm stops  when $\lambda_k = \lambda^\dag$, we end with $\delta_{\rm max} = L_k$ and  $\bar \lambda^\ast (\delta_{\rm max}) = \lambda^\dag$. Otherwise, we obtain $\delta_{\rm max} =  \lim_{k \rightarrow \infty}L_k=L^\ast$.
				By contradiction, suppose $\bar \lambda^\ast(L^\ast) \neq \lambda^\dag$ leading to $L^\ast  \neq \delta^\dag$. First, consider $L^\ast < \delta^\dag$. According to  \eqref{eq53}, we obtain
				$$\exists  \, l>0 \quad {\rm s.t.} \quad d_k < \delta^\dag \quad {\rm for} \quad \forall \, k \geq l.$$
				Following Lemma \ref{lemma2}, we obtain $\bar \lambda^\ast(d_k) \leq \bar \lambda^\ast(\delta^\dag) = \lambda^\dag$, which contradicts the assumptions in Algorithm \ref{algorithm1}.
				On the other hand, if $L^\ast > \delta^\dag$, a similar analysis can be made for $b_k$ which leads to a contradiction. Consequently, it follows that $\bar \lambda^\ast(L^\ast) = \bar \lambda^\ast(\delta_{\rm max}) = \lambda^\dag$. 
			\end{proof}
			\section{Simulation Results}\label{sec:examples}
			\begin{example}
				Consider a dynamic MAS with $3$ agents who satisfy Assumptions \ref{assumption2} and \ref{assumption3} in the time horizon $N=6$, with the system parameters which are selected  as
				\begin{equation*}
					\small
					\begin{aligned}
						&{\mathbf A_1} = \left[ {\begin{array}{*{20}{c}}
								{ - 0.6}&{ - 0.1}&{0.2}\\
								{0.3}&{ - 0.7}&{0.2}\\
								{0.2}&{ - 0.3}&{0.8}
						\end{array}} \right],	\, {\mathbf B_1} = \left[ {\begin{array}{*{20}{c}}
								2&1\\
								1&7\\
								1&6
						\end{array}} \right], \\ &{\mathbf A_2} = \left[ {\begin{array}{*{20}{c}}
								{0.5}&{0.1}&{ - 0.1}\\
								{0.3}&{ - 0.2}&{ - 0.2}\\
								{ - 0.2}&{0.3}&{ - 0.3}
						\end{array}} \right], \, {\mathbf B_2} = \left[ {\begin{array}{*{20}{c}}
								4&1\\
								1&6\\
								3&4
						\end{array}} \right], \\ &{\mathbf A_3} = \left[ {\begin{array}{*{20}{c}}
								{ - 0.4}&{0.2}&{0.2}\\
								{ - 0.3}&{0.8}&{0.2}\\
								{0.2}&{0.3}&{ - 0.5}
						\end{array}} \right], \,
						{\mathbf B_3} = \left[ {\begin{array}{*{20}{c}}
								9&2\\
								2&3\\
								4&5
						\end{array}} \right],
					\end{aligned}
				\end{equation*}
				and the initial states 
				\begin{equation*}
					\small
					{\mathbf x_1}(0) = \left[ {\begin{array}{*{20}{c}}
							{10}\\
							{40}\\
							{70}
					\end{array}} \right], \, {\mathbf x_2}(0) = \left[ {\begin{array}{*{20}{c}}
							{20}\\
							{50}\\
							{80}
					\end{array}} \right], \, {\mathbf x_3}(0) = \left[ {\begin{array}{*{20}{c}}
							{30}\\
							{60}\\
							{90}
					\end{array}} \right].
				\end{equation*}
				Additionally, suppose agents have the representative local resources $a_1(t)=\sin(\frac{\pi}{6}t)+1.2$, $a_2(t)=2\sin(\frac{\pi}{6}t)+2.2$, and $a_3(t)=0$. The total network generation is obtained as $C(t)=\sum_{i=1}^{3}a_i(t)=3\sin(\frac{\pi}{6}t)+3.4$ and depicted in Fig. \ref{fig_1}.
				\begin{figure}[!t]
					\centering
					\centerline{\psfig{figure=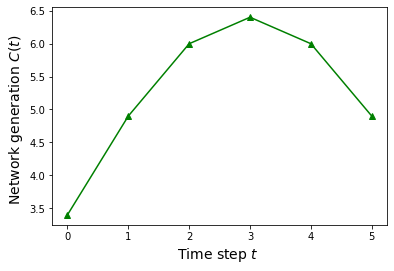,width=2.55in}}
					\caption{The network generation $C(t)$ over time steps.}
					\label{fig_1}
				\end{figure}
				Also, let $\lambda^\dag =30$, 
				$\mathbf R_1 = \mathbf R_2 = \mathbf R_3 = 0.05 \mathbf I$, and
				\begin{equation*}
					\small
					\begin{split}
						{\mathbf H_1} = \left[ {\begin{array}{*{20}{c}}
								5&1\\
								1&8
						\end{array}} \right], \, {\mathbf H_2} = \left[ {\begin{array}{*{20}{c}}
								3&2\\
								2&7
						\end{array}} \right], \, {\mathbf H_3} = \left[ {\begin{array}{*{20}{c}}
								2&{-1}\\
								{-1}&1
						\end{array}} \right].
					\end{split}
				\end{equation*}
				We aim to design $\mathbf Q_i$ for $i \in \mathcal{V}$ such that $\lambda^\ast_t \leq \lambda^\dag$ at all time steps. 
				
				\medskip
				\textit{Analytical approach:} For this example, the upper bound $\delta_{\rm max}$ of the personalized parameter $\mathbf Q_i$ is obtained from Theorems \ref{theorem2} and \ref{theorem3} as $0.00055$ and $0.0010$, respectively. We observe that Theorem \ref{theorem3} provides a larger upper bound compared to Theorem \ref{theorem2} so we assign $\delta_{\rm max}=0.001$. Next, we must select $\mathbf Q_i$ such that $\left\| \mathbf Q_i \right\| \leq \delta_{\rm max}$  (or $\mathbf Q_i \leq \delta_{\rm max} \mathbf I$) for $i \in \mathcal{V}$. Let us carefully choose $\mathbf Q_1 = \mathbf Q_2 = \mathbf Q_3 = 0.001 \mathbf I$. The optimal prices obtained from solving the social welfare problem in \eqref{opt_social_DLD_1} are depicted in Fig. \ref{fig_2} at different time steps.
			\begin{figure}[t]
				\centering
				\begin{subfigure}{.75\columnwidth}
					\centering
					\includegraphics[width=\linewidth]{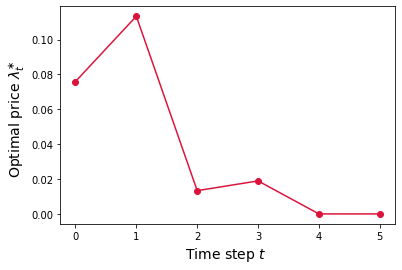}
					\caption{ $\delta_{\rm max}=0.001$, obtained from Theorem \ref{theorem3}.}
					\label{fig_2}
				\end{subfigure}%
			\quad
				\hfill
				\begin{subfigure}{.75\columnwidth}
					\centering
					\includegraphics[width=\linewidth]{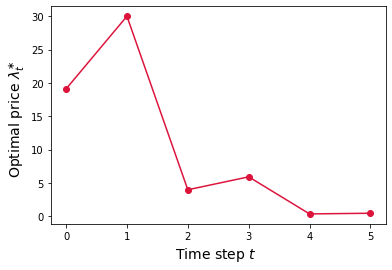}
					\caption{$\delta_{\rm max}=0.202$, obtained from Algorithm \ref{algorithm1}.}
					\label{fig_3}
				\end{subfigure}
				\caption{The optimal price $\lambda^\ast_t$ over time steps.}
			\end{figure}
				As indicated, the optimal prices are much less than $30$, and therefore, socially resilient; this confirms that Theorems \ref{theorem2} and \ref{theorem3} are valid but they provide conservative results.
				
				\medskip
				\textit{Numerical approach:} On the other hand, we run  Algorithm \ref{algorithm1} for 20 steps with the choice of $d_\varrho=1$. This value of $d_\varrho$ is sufficiently large that satisfies $\bar \lambda^\ast(d_\varrho)=148.6> \lambda^\dag$. The output of the algorithm is $\delta_{\rm max}=0.202$. Setting $\mathbf Q_1 = \mathbf Q_2 = \mathbf Q_3= 0.202 \mathbf I$, the obtained optimal prices  are as Fig. \ref{fig_3}, which are less than or equal to $30$ and socially resilient. 
				The maximum value of the price throughout the entire time horizon is $\lambda^\ast_1=30.0$ happening at time step $t=1$. If we select $\mathbf Q_i = q_i \mathbf I$ such that $q_i > 0.202$ for $i \in \mathcal{V}$, then we obtain $\lambda^\ast_1 > 30$, which is not socially acceptable. This shows the proposed numerical algorithm works well in practice. 
			\end{example}
			\section{CONCLUSIONS AND FUTURE WORKS}\label{sec:conclusions}
			In this paper, we have studied the issue of social shaping for dynamic MAS  over finite horizons. The system under consideration was a self-sustained dynamic MAS with distributed resource allocations operating at a competitive equilibrium. We presented a conceptual scheme which shows  how the social shaping problem is solvable implicitly under some convexity assumptions. As a typical case in the literature, we examined quadratic MAS. Dealing with an LQR problem using quadratic programming and dynamic programming,  we proposed two sets of quadratic utility functions under which the resource pricing at the competitive equilibrium is guaranteed to be socially acceptable, i.e., below a prescribed threshold.  Finally, we presented a numerical algorithm which provides more accurate bounds on the agents' preferences compared to the proposed analytical sets. As future work, it is suggested to extend the results to the infinite horizon case and consider network constraints in the framework.



\begin{thebibliography}{99}
				
				\bibitem{zhang2020structured} J. Zhang and C. Xia, ``Structured multi-agent-based model for bankruptcy contagion with cash flow," in {\em IEEE Access}, vol. 8, pp. 171716--171729, 2020.
				
				\bibitem{negenborn2009distributed} R. R. Negenborn, P. J. van Overloop, T. Keviczky, and B. De Schutter, ``Distributed model predictive control of irrigation canals," {\em Netw. Heterogeneous Media},
				vol. 4, no. 2, pp. 359--380, 2009.
				
				\bibitem{narciso2021carbon} J. B. Narciso de Sousa, Z. Kokkinogenis, and R. J. Rossetti,  ``Carbon market multi-agent simulation model," in {\em EPIA Conference on Artificial Intelligence},  pp. 661--672, 2021.
				
				\bibitem{cheng2020safe} R. Cheng, M. J. Khojasteh, A. D. Ames, and J. W. Burdick, ``Safe multi-agent interaction through robust control barrier functions with learned uncertainties,"  in {\em Proc. 59th IEEE Conference on Decision and Control (CDC)}, 2020, pp. 777--783.
				
				\bibitem{falsone2021new} A. Falsone, K. Margellos, J. Zizzo, M. Prandini, and S. Garatti, ``New results on resource sharing problems with random agent arrivals and an application to economic dispatch in power systems," in {\em Proc. 60th IEEE Conference on Decision and Control (CDC)}, 2021, pp. 5130--5135.
				
				\bibitem{pipattanasomporn2009multi} M. Pipattanasomporn, H. Feroze, and S. Rahman, ``Multi-agent systems in a distributed smart grid: Design and implementation,"  in {\em Proc. IEEE Power Systems Conf. and Expo.}, 2009, pp. 1--8.
				
				\bibitem{deplano2018lyapunov} D. Deplano, M. Franceschelli, and A. Giua, ``Lyapunov-free analysis for consensus of nonlinear discrete-time multi-agent systems,"  in {\em IEEE Conference on Decision and Control (CDC)}, 2018, pp. 2525--2530.
				
				\bibitem{ebegbulem2018resource} J. Ebegbulem and M. Guay, ``Resource allocation for a class of multi-agent systems with unknown dynamics using extremum seeking control,"  in {\em IEEE Conference on Decision and Control (CDC)}, 2018, pp. 2496--2501.
				
				\bibitem{lu2022distributed} K. Lu, H. Xu, and Y. Zheng, ``Distributed resource allocation via multi-agent systems under time-varying networks," {\em Automatica}, vol. 136, p.110059, 2022.
				
				
				\bibitem{nguyen2014computational} N.-T. Nguyen, T. T. Nguyen, M. Roos, and J. Rothe, ``Computational complexity and approximability of social welfare optimization in multiagent resource allocation,” {\em Auton. Agents Multi-Agent Syst.}, vol. 28,
				no. 2, pp. 256--289, 2014.
				
				\bibitem{chevaleyre} Y. Chevaleyre, P. E. Dunne, U. Endriss, J. Lang, N. Maudet, and J. A.
				Rodr{\'I}Guez-Aguilar, ``Multiagent resource allocation," {\em The Knowledge
				Engineering Review}, vol. 20, no. 2, pp. 143--149, 2005.
				
				
				\bibitem{Bikhchandani1997} S. Bikhchandani and J. W. Mamer, ``Competitive equilibrium in an exchange economy with indivisibilities," {\em J. Econ. Theory}, vol. 74, no. 2,
				pp. 385–413, 1997.
				
				\bibitem{wei2014competitive} E. Wei, A. Malekian, and A. Ozdaglar, ``Competitive equilibrium in electricity markets with heterogeneous users and price fluctuation penalty," in {\em Proc. 53rd IEEE Conference on Decision and Control (CDC)}, 2014, pp. 6452--6458.
				
				\bibitem{acemoglu2018} D. Acemoglu, D. Laibson, and J. List, {\em Microeconomics}, 2nd edition,
				Pearson, 2018.
				
				\bibitem{arrow1954} K. J. Arrow and G. Debreu, “Existence of an equilibrium for a competitive economy,” {\em Econometrica: Journal of the Econometric Society},
				vol. 22, no. 3, pp. 265–290, 1954.
				
				\bibitem{debrew1952} G. Debreu, “A social equilibrium existence theorem,” {\em Proc. Nat. Acad. Sci.}, vol. 38, no. 10, pp. 886–893, 1952.
				
				\bibitem{chen2021social} Y. Chen, R. Islam, E. Ratnam, I. R. Petersen, and G. Shi, ``Social shaping of competitive equilibriums for resilient multiagent systems," in \textit{Proc. 60th IEEE Conference on Decision and Control (CDC)}, 2021, pp. 2621--2626. 
				
				\bibitem{mas-collel} A. Mas-Colell, M. D. Whinston, and J. R. Green, {\em Microeconomic Theory}. London, U.K.: Oxford Univ. Press, 1995.
				
				\bibitem{Li2020} S. Li, J. Lian, A. J. Conejo, and W. Zhang, ``Transactive energy systems: The market-based coordination of distributed energy resources," in {\em IEEE Control Systems Magazine}, vol. 40, no. 4, pp. 26--52, Aug. 2020.
				
			\bibitem{ma2020incentive} K. Ma and P. R. Kumar, ``Incentive compatibility in stochastic dynamic systems," \textit{IEEE Transactions on Automatic Control}, vol. 66, no. 2, pp. 651--666, 2021.
				
				\bibitem{groves1973incentives} T. Groves, ``Incentives in teams," \textit{Econometrica}, vol. 41, no. 4, pp. 617--631, 1973.
			
				\bibitem{salehi2021quadratic} Z. Salehi, Y. Chen, E. Ratnam, I. R. Petersen, and G. Shi, ``Social shaping of linear quadratic multi-agent systems," in  \textit{Proc. 2021 Australian \& New Zealand Control Conference (ANZCC)}, 2021, pp. 232--237.
				
				\bibitem{salehi2021social} Z. Salehi, Y. Chen, E. Ratnam, I. R. Petersen, and G. Shi, ``Social shaping for transactive energy systems," Preprint at arXiv:2109.12967, 2021. 
				
				\bibitem{Texas2021} S. Blumsack, ``What’s behind \$15,000 electricity bills in Texas," {\em https://theconversation.com}, 2021.
				
			
			\end{thebibliography}
		\end{document}